\documentclass[10pt]{article}

\usepackage[a4paper,top=2cm,bottom=2cm,left=3cm,right=3cm,marginparwidth=1.75cm]{geometry}

\usepackage{amsmath}
\usepackage{amssymb}
\usepackage{amsthm}
\usepackage{graphicx}
\usepackage{hyperref}
\usepackage[all]{xy}
\usepackage{enumitem}
\usepackage{authblk}
\usepackage{xcolor}
\usepackage{textcomp}

\graphicspath{{././figure/}}

\theoremstyle{plain}
\newtheorem{theorem}{Theorem}[section]

\newtheorem{proposition}[theorem]{Proposition}

\theoremstyle{definition}

\newtheorem{example}[theorem]{Example}
\theoremstyle{remark}

\numberwithin{equation}{section}

\newcommand{\CC}{\mathbb{C}}
\newcommand{\NN}{\mathbb{N}}
\newcommand{\PP}{\mathbb{P}}
\newcommand{\RR}{\mathbb{R}}
\newcommand{\ZZ}{\mathbb{Z}}
\newcommand{\id}{\mathrm{id}}

\renewcommand{\Im}{\operatorname{Im}}
\newcommand{\Int}{\operatorname{Int}}

\newcommand{\Lk}{\mathrm{Lk}}

\title{Analysis of Hopf solitons as generalized fold maps}
\author[1,2]{Yuta Nozaki}
\author[3]{Darian Hall}
\author[2,3,4,5]{Ivan I. Smalyukh}
\author[2,6,*]{Yuya Koda}
\affil[1]{\footnotesize Department of Mathematics, Faculty of Science, Hokkaido University, Sapporo 060-0810, Japan;}
\affil[2]{\footnotesize 
International Institute for Sustainability with Knotted Chiral Meta Matter (WPI-SKCM$^2$), Hiroshima University, 1-3-1 Kagamiyama, Higashi-Hiroshima, Hiroshima 739-8531, Japan;}
\affil[3]{\footnotesize 
Department of Physics and Chemical Physics Program,
University of Colorado, Boulder, CO 80309, USA;}
\affil[4]{\footnotesize 
Department of Electrical, Computer, and Energy Engineering,
Materials Science and Engineering Program, University of Colorado, Boulder, CO 80309, USA;}
\affil[5]{\footnotesize 
Renewable and Sustainable Energy Institute, National Renewable Energy Laboratory,
University of Colorado, Boulder, CO 80309, USA;}
\affil[6]{\footnotesize Department of Mathematics, Hiyoshi Campus, Keio University, Yokohama 223-8521, Japan; and}
\affil[*]{\footnotesize Corresponding author: \texttt{koda@keio.jp}}

\begin{document}
\date{}
\maketitle

\begin{abstract}
The Hopf index, a topological invariant that quantifies the linking of preimage fibers, is fundamental to the structure and stability of hopfions. In this work, we propose a new mathematical framework for modeling hopfions with high Hopf index, drawing on the language of singularity theory and the topology of differentiable maps. At the core of our approach is the notion of a generalized Hopf map of order $n$, whose structure is captured via fold maps and their Stein factorizations. We demonstrate that this theoretical construction not only aligns closely with recent experimental observations of high-Hopf-index hopfions, but also offers a precise classification of the possible configurations of fiber pairs associated to distinct points. Our results thus establish a robust bridge between the geometry of singular maps and the experimentally observed topology of complex field configurations of hopfions in materials and other physical systems.
\end{abstract}

\section{Introduction}
\label{sec:introduction}

The \emph{Hopf map}, also known as the \emph{Hopf fibration},  
is a map $\varphi$ from the three-dimensional sphere $S^3$ to the two-dimensional sphere $S^2$ that provides a fundamental example of a homotopically non-trivial map, 
representing a generator of $\pi_3 (S^2) \cong \ZZ$. 
It is specifically defined as 
$\varphi (z_1, z_2 ) =  ( z_1 : z_2 )$, 
where $S^3$ is regarded as 
$\{ (z_1, z_2) \in \CC^2 \mid |z_1|^2 + |z_2|^2 = 1 \} \subset \CC^2$, $S^2$ is identified with the complex projective line $\CC \PP^1$, and $(z_1 : z_2)$ is the homogeneous coordinates of $\CC \PP^1$. 
See Figure~\ref{fig:Hopf_fibration}.
This map is characterized by the fact that the preimage of any two distinct points in $S^2$ 
forms linked circles known as the \emph{Hopf link}, which is illustrated in Figure~\ref{fig:Hopf_link}. 

\begin{figure}[htbp]
\centering\includegraphics[width=0.5\textwidth]{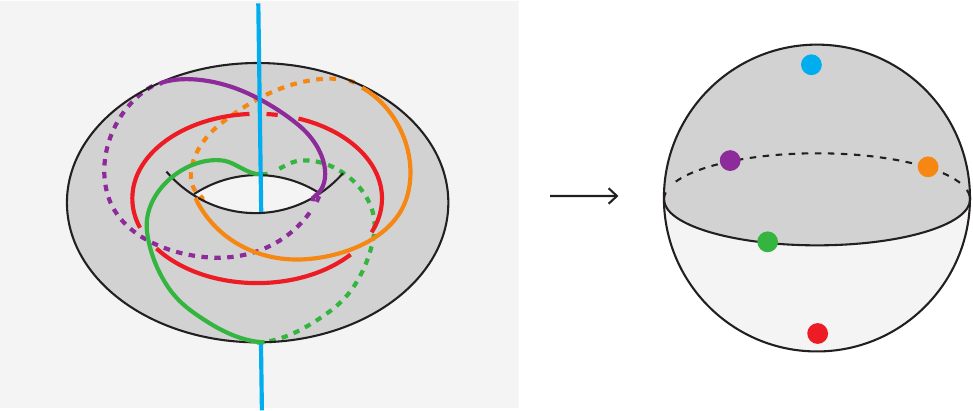}
\begin{picture}(400,0)(0,0)
\put(218,66){$\varphi$}
\put(145,-1){$S^3$}
\put(268,-1){$S^2$}
\end{picture}
\caption{The Hopf map.}
\label{fig:Hopf_fibration}
\end{figure} 

\begin{figure}[htbp]
\centering\includegraphics[width=0.16\textwidth]{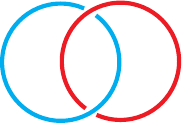}
\begin{picture}(400,0)(0,0)
\end{picture}
\caption{The Hopf link.}
\label{fig:Hopf_link}
\end{figure}

Since its discovery, the Hopf fibration has played a crucial role in diverse mathematical and physical contexts, ranging from homotopy groups of spheres to applications in describing various physical systems, ranging from subatomic particles to liquid crystals.
We refer the reader to \cite{Shn18}, \cite{MaSu04}.

Hopfions, topological solitons associated with Hopf fibrations, have been studied extensively in mathematical physics, theories of particle physics and cosmology and material science (see \cite{FaNi97}, \cite{Han17}, \cite{AcSm17PRX}, \cite{AcSm17NM}, \cite{TaSm18}). 
They appear in diverse physical systems such as liquid crystals, magnetism, and field theory, where they represent stable, topologically non-trivial configurations. 
The Hopf index, which quantifies the linking of preimage fibers, plays a crucial role in understanding the structure and stability of hopfions. 
While most theoretical and experimental studies focused on elementary hopfions with Hopf index $Q=1$ or $Q=-1$, recent experiments \cite{TAS18} also uncovered interesting examples of hopfions with integer Hopf index invariants different from unity, as well as the energetically stabilized by medium's chirality composite states of hopfions with opposite values of Hopf index that have the net Hopf index $Q=0$. Additionally, interesting examples of non-axisymmetric hopfions include one that can feature single preimages shaped as Hopf links or unlinked loops (see \cite[Figure~6]{TAS18}).
In addition, chiral liquid crystals with helical background structure were found hosting 2D and 3D crystals of hopfions of the so-called ``heliknoton'' type (see \cite{TaSm19}, \cite{VTS20}, \cite{TWS22}).
These heliknotons are found to have Hopf indices different from $Q=1$ or $Q=-1$, with the diverse types of experimentally and numerically reconstructed preimage linkings for one such heliknoton of $Q=2$ illustrated in Figure~\ref{fig:heliknoton_motivation}. Similar diversity for many other experimentally and computationally observed complex and high-integer heliknotons is discussed elsewhere \cite{HTKNS}, calling for a detailed mathematical analysis of possible types of preimage linkings in such topological objects, motivating our present study.

\begin{figure}[htbp]
\centering
\includegraphics[width=0.9\textwidth]{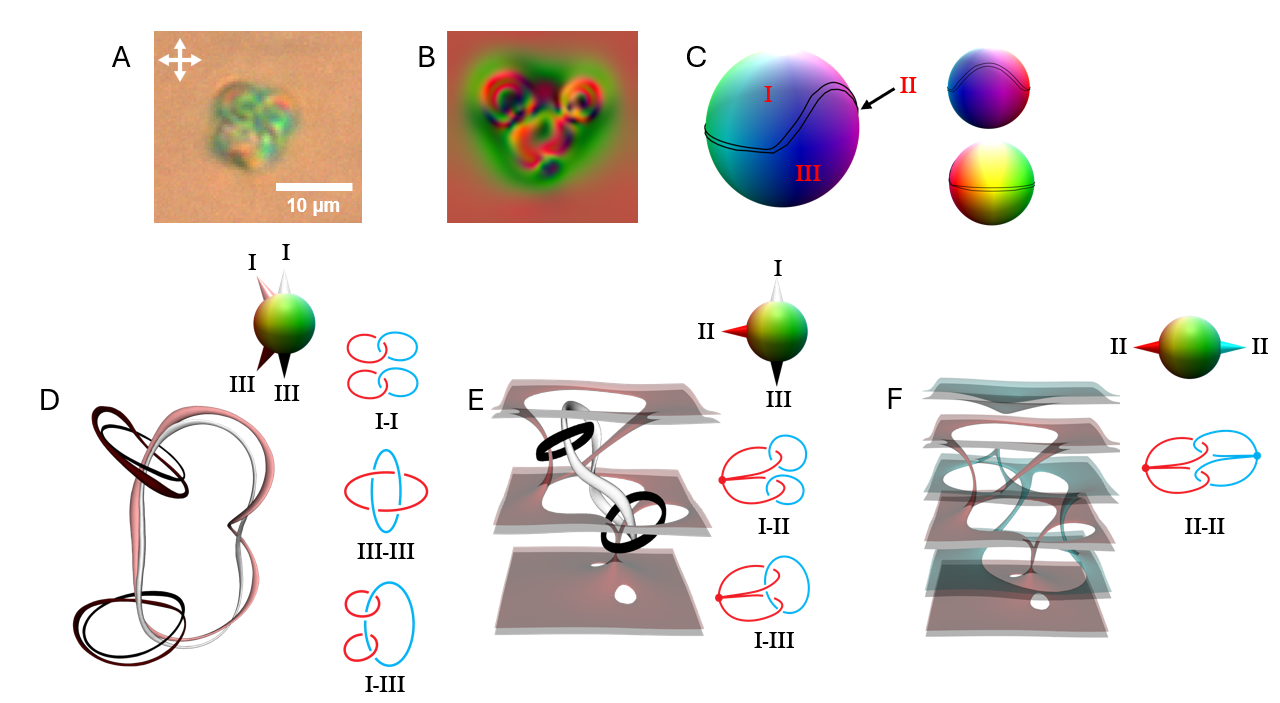}
\caption{An example of a $Q=2$ hopfion in a chiral nematic liquid crystal with diverse preimage linking structures. (A,B) Experimental (A) and computer-simulated (B) polarizing optical micrographs of the hopfion obtained between crossed polarizers with transmission axes parallel top the image edges (white double arrows). (C) Order parameter space (ground-state manifold) of the vectorized nematic director field where sub-spaces I, II, and III have different structures of preimages and their linking. (D) Ilustration of three types of preimage links that depend on the combinations of preimage pairing between regions I--III, exhibiting two Hopf links, or one solomon link, or a key-chain structure, all with the linking number equal two. (E,F) Three other types of preimage linkings also characterized by linking number of $Q=2$. For all preimages, the corresponding values of the constant order parameter are shown by cones of respective colors on the ground state manifold. Experimental samples and numerical parameters correspond to the planar chiral nematic cell comprising a mixture of pentylcyanobiphenyl nematic medium and chiral additive similar to the one described in \cite{TaSm19}.
}
\label{fig:heliknoton_motivation}
\end{figure}

In this paper, we model hopfions from a purely topological perspective by considering a class of smooth maps from $S^3$ to $S^2$ that naturally generalize the classical Hopf fibration.

Explicit analytic and numerical approximations to hopfions have been constructed in a number of earlier works within field-theoretic settings.  
In particular, Sutcliffe obtained families of knotted soliton solutions in the Skyrme-Faddeev model~\cite{Sut07}, 
where the field configurations are determined by minimizing an energy functional subject to a fixed Hopf index.  
Subsequently, Harland, Speight, and Sutcliffe proposed the so-called elastic rod approximation~\cite{HSS11}, 
which represents hopfions as geometrically deformed tubes whose centerlines follow closed space curves. 
These studies provide explicit realizations of hopfions as energetically stable configurations and offer intuitive geometric pictures of their shapes.  
Their approaches are formulated within specific field-theoretic models, most notably the Skyrme-Faddeev framework, and are primarily aimed at understanding the energetics and stability of such configurations.  

In contrast, the approach developed in the present paper is entirely model-independent and topological in nature.  
Instead of deriving approximate energy-minimizing fields, we regard hopfions as smooth maps $S^3 \to S^2$ equipped with prescribed singular structures.  
Specifically, we introduce the notion of a \emph{generalized Hopf map of order $n$}, formulated in terms of fold maps and their Stein factorizations,  
so that the linking of preimages becomes a central and intrinsic invariant.  
This formulation, rooted in the topological theory of singular fibers of differentiable maps~(\cite{GG73}, \cite{Sae04}),  
provides a unified framework that systematically classifies hopfions with arbitrary Hopf index and describes their geometric features independently of any field-theoretic assumptions.  
In this sense, our work complements the analytic constructions of \cite{Sut07} and \cite{HSS11} by shifting the focus from model-specific energetic stability to a purely topological classification of possible preimage linkings.  
Beyond providing a theoretical classification, this framework also enables a direct comparison between the topology of modeled hopfions and experimentally observed structures in physical systems.

Furthermore, we motivate and compare our theoretical model with experimental observations \cite{HTKNS} (Figure~\ref{fig:heliknoton_motivation}) of high-Hopf-index hopfions. 
We present imaging-based experimentally and numerically reconstructed preimages obtained from analyzing hopfion structures in the chiral liquid crystal material systems that correspond to distinct points under the Hopf map, demonstrating strong agreement between the mathematical model and experimental/computational results. 
Several specific examples from prior literature findings, like the composite topological solitons made of hopfions with opposite signs of Hopf index and ones with complex preimages are also discussed in the context of possible preimage linking configurations (see \cite{TAS18}, \cite{AcSm17PRX}).
This validation highlights the applicability of our approach in describing real-world topological structures. 
Moreover, we show that our main theorem provides a complete classification of the possible configurations of the fibers of two distinct points under the constructed high-Hopf-index hopfion model.

\subsection*{Acknowledgments}
Y.N. was supported by JSPS KAKENHI Grant Number JP23K12974.
Research of I.I.S. and D.H. was supported by the US Department of Energy, Office of Basic Energy Sciences, Division of Materials Sciences and Engineering, under contract DE-SC0019293 with the University of Colorado at Boulder.
Y.K. was supported by JSPS KAKENHI Grant Numbers JP23H05437, JP23K20791 and JP24K06744. 
Authors acknowledge the support and hospitality of the International Institute for Sustainability with Knotted Chiral Meta Matter (WPI-SKCM$^2$) at Hiroshima University, which helped to initiate this collaboration.
This work was supported by JSPS Program for Forming Japan's Peak Research Universities (J-PEAKS) Grant Number JPJS00420230011.
Finally, the authors are grateful to the anonymous referee for valuable suggestions.

\section{Preliminaries}

\subsection{Dehn twists and annulus twists}

To describe this construction explicitly, we identify the disk 
$D^2$ with the unit disk in $\CC$, 
\begin{equation}\label{eq: D2} 
D^2 = \{ r e^{i \theta} \mid 0 \leq r \leq 1,\, \theta \in \RR \} , 
\end{equation}
where its boundary is given by the unit circle 
\begin{equation}\label{eq: S1} 
S^1 = \partial D^2 = \{ e^{i \theta} \mid \theta \in \RR \}.
\end{equation}

Let $\Sigma$ be an oriented surface, and let $\gamma$ be a simple closed curve in $\Sigma$.  
A closed tubular neighborhood $N(\gamma)$ of $\gamma$ can be identified with the annulus  
$S^1 \times [0,1]$ via an orientation-preserving homeomorphism  
\begin{equation}\label{eq: phi_gamma}
\phi_\gamma\colon S^1 \times [0,1] \to N(\gamma).
\end{equation}

Here, we use the standard parameterization $(e^{i \theta }, t)$ for each point in $S^1 \times [0,1]$ and assume that $S^1 \times [0,1]$ inherits a natural orientation from the $(\theta, t)$-plane.  
We then define a map $T_\gamma\colon \Sigma \to \Sigma$ by  
\begin{equation}\label{eq: T_gamma}
T_{\gamma}(x) =
\begin{cases}
x, & x \in \Sigma \setminus N(\gamma), \\ 
(e^{i(\theta + 2\pi \phi(t))}, t), & (e^{i \theta}, t) \in N(\gamma),
\end{cases}
\end{equation}
where $\phi\colon \mathbb{R} \to \mathbb{R}$ is a smooth function satisfying  
\begin{equation}\label{eq: phi first}
\phi(t) =
\begin{cases}
0, & t \in (-\infty, 0), \\
1, & t \in (1, \infty),
\end{cases}
\end{equation}
and $\phi|_{[0,1]}$ is monotonically increasing.  
See Figure~\ref{fig:Dehn_twist}. 
\begin{figure}[htbp]
\centering\includegraphics[width=0.5\textwidth]{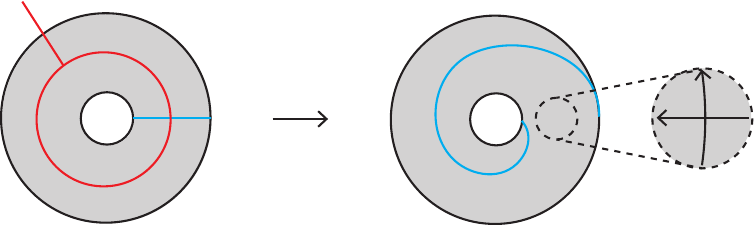}
\begin{picture}(400,0)(0,0)
\put(110,-1){$N(\gamma)$}
\put(220,-1){$N(\gamma)$}
\put(92,76){\color{red} $\gamma$}
\put(172,48){$T_\gamma$}
\put(288,60){$\theta$}
\put(272,40){$t$}
\end{picture}
\caption{A Dehn twist $T_\gamma\colon \Sigma \to \Sigma$.}
\label{fig:Dehn_twist}
\end{figure}
This map is called a (\emph{right-handed}) \emph{Dehn twist along $\gamma$}.  
Note that the Dehn twist $T_\gamma$ is defined only up to isotopy.  

Roughly speaking, an \emph{annulus twist}, which we now introduce, is obtained from a Dehn twist by taking its product with $[0,1]$.  
Let $M$ be an oriented $3$-manifold, and let $A$ be an annulus properly embedded in $M$, where the orientation of the core circle of $A$ is fixed.  
A closed tubular neighborhood $N(A)$ of $A$ can be identified with the product space $(S^1 \times [0,1]) \times [0,1]$ via an orientation-preserving homeomorphism  
\begin{equation}
(S^1 \times [0,1]) \times [0,1] \longrightarrow N(A)
\end{equation}
such that $(S^1 \times [0,1]) \times \{\frac{1}{2}\}$ is identified with the annulus $A$, and the orientation of $S^1$ agrees with the given orientation of the core of $A$.

Then, in analogy with the definition of Dehn twists, we define a map  
$\tau_A\colon M \to M$ by  
\begin{equation}\label{eq: tau_A}
\tau_A(x) \! = \!
\begin{cases}
x, & x \in M \setminus N(A), \\ 
((e^{i(\theta + 2\pi \phi(t))}, t), s), & ((e^{i \theta}, t), s) \in N(A) ,
\end{cases}
\end{equation}
where $\phi\colon \mathbb{R} \to \mathbb{R}$ is again a smooth function that satisfies
\begin{equation}\label{eq: phi second}
\phi(t) =
\begin{cases}
0, & t \in (-\infty, 0), \\
1, & t \in (1, \infty),
\end{cases}
\end{equation}
and $\phi|_{[0,1]}$ is monotonically increasing.  
See Figure~\ref{fig:annulus_twist}.  
\begin{figure}[htbp]
\centering\includegraphics[width=0.4\textwidth]{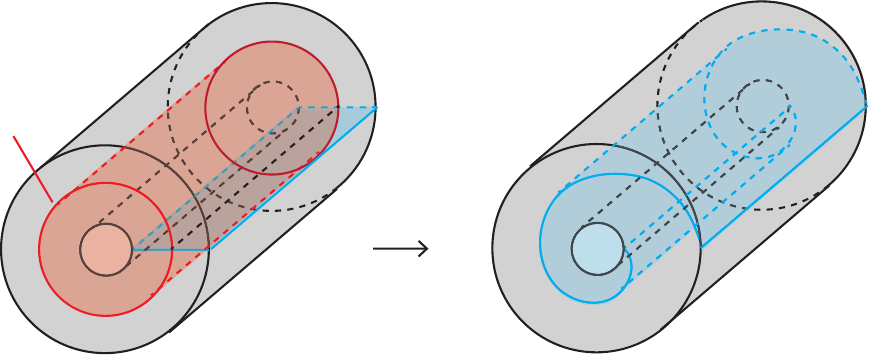}
\begin{picture}(400,0)(0,0)
\put(130,-1){$N(A)$}
\put(230,-1){$N(A)$}
\put(110,58){\color{red} $A$}
\put(188,37){$\tau_A$}
\end{picture}
\caption{An annulus twist $\tau\colon M \to M$.}
\label{fig:annulus_twist}
\end{figure}
This map is called a (\emph{right-handed}) \emph{annulus twist along $A$}.  
Like Dehn twists, an annulus twist can also be defined as a smooth map up to isotopy.  
Note that we have  
\begin{equation}\label{eq: tau|_partial M}
\tau_A|_{\partial M} = T_{a_1} \circ T_{a_2}^{-1},
\end{equation}
where $a_1, a_2 \subset \partial M$ are the boundary components of $A$.

\subsection{Fold maps and their generalization}
\label{subsec:Fold maps and their generalization}
Let $M$ be a closed, orientable $3$-manifold, and let $f$ be a smooth map from $M$ to an orientable surface $\Sigma$.  
A point $p \in M$ is called a \emph{singular point} of $f$ if the rank of the differential $df_p$ is less than $2$.  
We denote the set of singular points of $f$ by $S(f)$.

A point $p \in S(f)$ is called a \emph{fold point} if, in a neighborhood of $p$, the map $f$ is locally equivalent (up to diffeomorphism) to one of the following forms:
\begin{enumerate}[label=(\arabic*)]
\item
$(u, x, y) \mapsto (u, x^2 + y^2)$;  
\item
$(u, x, y) \mapsto (u, x^2 - y^2)$.
\end{enumerate}
In cases (1) and (2), $p$ is referred to as a \emph{definite fold point} and an \emph{indefinite fold point}, respectively.  
A smooth map $f \colon M \to \Sigma$ is called a \emph{fold map} if the set $S(f)$ consists only of fold points.

The notion of an indefinite fold point can be naturally generalized as follows.  
We identify $(x, y) \in \mathbb{R}^2$ with $x + i y \in \mathbb{C}$.  
For an integer $n \geq 2$, define a map $\omega_n \colon \RR^2 \to \RR$ by
\begin{equation}\label{eq: omega_n}
\omega_n(x, y) = \mathrm{Re}(x + i y)^n = r^n \cos(n\theta),
\end{equation}
where $\mathrm{Re}( \, \cdot \, )$ denotes the real part and $x + i y = r e^{i \theta}$.  
This map has an \emph{$n$-fold saddle point} at the origin $(0, 0)$.

A point $p \in S(f)$ is called an 
\emph{indefinite $n$-fold singularity}, or a \emph{multi-fold singularity}, if, 
in a neighborhood of $p$, the map $f$ 
is locally equivalent (up to diffeomorphism) to
\begin{equation}\label{eq:n-fold_singularity}
(u, x, y) \mapsto (u, \omega_n(x, y)).
\end{equation}
Note that when $n = 2$, this corresponds to an ordinary indefinite fold point.

Let $M$ be a closed, orientable $3$-manifold, and let $\Sigma$ be an orientable surface.  
Suppose $f \colon M \to \Sigma$ is a smooth map such that $S(f)$ consists only of indefinite multi-fold singularities.  
Then, $S(f)$ forms a link in $M$.  
We say that $f$ is \emph{simple} if the restriction $f|_{S(f)} \colon S(f) \to \Sigma$ is an embedding.  
We denote by $\mathcal{F}(M, \Sigma)$ the set of all smooth maps from $M$ to $\Sigma$ whose singular points consist only of simple indefinite multi-fold singularities.

\subsection{Stein Factorizations}

We begin with an informal description of the Stein factorization.
Given a smooth map $f\colon M \to \Sigma$, one considers the decomposition
of $M$ into connected components of the fibers of $f$.
The Stein factorization is obtained by collapsing each such connected
component to a single point.
Roughly speaking, the resulting quotient space provides a simplified description of the map that records how fibers are connected, such as how they split or merge, while suppressing their internal geometric details.

We now give a precise definition of this construction. 
Let $f\colon M \to \Sigma$ be a smooth map  
from a closed, orientable $3$-manifold to an orientable surface $\Sigma$.  
We say that two points $p_1, p_2 \in M$ are \emph{equivalent}  
if they satisfy $f (p_1) = f (p_2) =: q$ and  
belong to the same connected component of the preimage $f^{-1}(q)$.  
We denote by $W_f$ the quotient space of $M$  
with respect to this equivalence relation and by $q_f$ the quotient map.  
We define the map $\bar{f}\colon W_f \to \Sigma$ so that  
\begin{equation}\label{eq: f=barf circ qf}
f = \bar{f} \circ q_f.
\end{equation}
The composition $\bar{f} \circ q_f$, or the quotient space $W_f$, is referred to as the \emph{Stein factorization} of $f$. 

When $f \in \mathcal{F}(M,\Sigma)$, where the set $\mathcal{F}(M,\Sigma)$
is defined in Section~\ref{subsec:Fold maps and their generalization},
the quotient space $W_f$ has a particularly simple structure.
In this case, $W_f$ can be understood as a $2$-dimensional object
obtained by gluing together a finite collection of compact surfaces
along their boundary components.
We explain this description in more detail below.

First, let $q\in\Sigma$ be a regular value of $f$.
Then the preimage $f^{-1}(q)$ is a $1$-dimensional submanifold of $M$,
and hence a disjoint union of circles.
In the Stein factorization, each connected component of such a fiber is
collapsed to a single point.
Consequently, over the set of regular values, each point of $W_f$ has a
neighborhood that is homeomorphic to an open subset of the surface
$\Sigma$, and in particular $W_f$ is $2$-dimensional in a neighborhood
of such points.

Next, consider a neighborhood of a singular value of $f$.
By the definition of the class $\mathcal{F}(M,\Sigma)$, the map $f$ has only
indefinite multifold singularities.
Locally near such a singular point, $f$ is equivalent to the map in \eqref{eq:n-fold_singularity}.
In this local model, as one crosses the image of the singular set in the
target, connected components of a regular fiber split into or merge with other components.
After collapsing each connected component of the fiber, a neighborhood
in the quotient space $W_f$ is obtained by gluing $2$-dimensional pieces
along $1$-dimensional pieces. 
Therefore, $W_f$ is a $2$-dimensional polyhedron.
More precisely, it is obtained by gluing together a finite collection
of compact surfaces along their boundaries.
Here, by a \emph{$2$-dimensional polyhedron}, we mean the underlying space of a finite $2$-dimensional simplicial complex.

For example, in the simplest case where a fiber contains a single indefinite fold singularity, the Stein factorization of its saturated neighborhood (that is, a neighborhood that is a union of entire fibers) is a $2$-dimensional polyhedron that is homeomorphic to the product of a Y-shaped space and an interval.
Figure~\ref{fig:indefinite_fold_stein} illustrates this local picture,
showing a saturated neighborhood of the singular fiber, the corresponding
quotient space $W_f$, and the induced maps $q_f$ and $\bar f$. 
\begin{figure}[htbp]
\centering\includegraphics[width=0.8\textwidth]{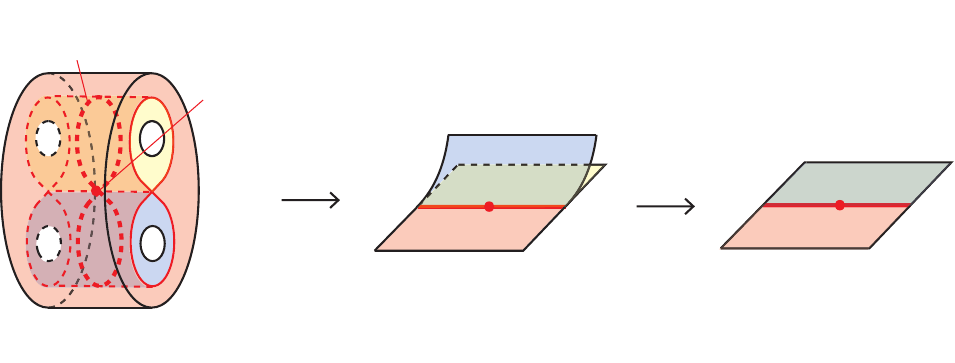}
\begin{picture}(400,0)(0,0)
\put(10,138){\color{red} A fiber containing a single} 
\put(10,125){\color{red} indefinite fold singularity}
\put(105,109){\color{red} An indefinite}
\put(105,96){\color{red} fold singularity}
\put(10,13){A saturated neighborhood}
\put(10,0){of the fiber}
\put(136,76){$q_f$}
\put(198,28){$W_f$}
\put(265,73){$\bar{f}$}
\put(330,28){$\Sigma$}
\end{picture}
\caption{The Stein factorization near a fiber containing a single indefinite fold singularity.}
\label{fig:indefinite_fold_stein}
\end{figure}

Moreover, with respect to such a simplicial structure, 
the polyhedron $W_f$ has no free faces.
Here, a \emph{free face} means a $1$-simplex that is contained in the
boundary of exactly one $2$-simplex.
The essential reason for the absence of free faces is that maps in
$\mathcal{F}(M,\Sigma)$ admit no definite fold points.
Indeed, a definite fold point corresponds, in the Stein factorization,
to the creation or annihilation of a connected component of a fiber,
which produces a $1$-simplex incident to only one $2$-simplex.
Since all fold singularities under consideration are indefinite, such
free faces do not occur in $W_f$. 

We refer the reader to \cite{Lev85}, \cite{Sae04}, and \cite{IK17} for
further details on Stein factorizations.
In particular, \cite[Chapter~1]{Lev85} contains a very detailed
description of the structure of Stein factorizations for smooth maps
from $3$-manifolds into $\mathbb{R}^2$.
In \cite[Chapter~1]{Lev85}, for \emph{stable maps}, whose singularities consist of
definite fold points, indefinite fold points (see
Section~\ref{subsec:Fold maps and their generalization}), and cusp points,
the structure of the Stein factorization over a saturated neighborhood
of a fiber is described according to the types of singular points contained in the
fiber.
Although the maps considered in the present paper are not stable in
general, the discussion of neighborhoods of fibers containing an
indefinite fold point extends to neighborhoods of fibers containing an
indefinite multifold singularity.
This extension is straightforward, since it essentially amounts to
increasing the number of branches.


\subsection{Hopf index}
\label{subsec:Hopf invariant}

Fix arbitrary orientations for $S^3$ and $S^2$.  
Let $\varphi\colon S^3 \to S^2$ be a smooth map.  
By Sard's theorem (see, e.g.,~\cite{GG73}), the image $\varphi(S(\varphi))$ of the set $S(\varphi)$ of critical points has Lebesgue measure zero.  
Thus, we can choose two points $q_1, q_2$ from $S^2 \setminus \varphi(S(\varphi))$.  
The preimage $\varphi^{-1} (q_i)$ of each $q_i$ ($i=1,2$) forms a link $L_i$ in $S^3$.  
Furthermore, from the pre-fixed orientations of $S^3$ and $S^2$, each component of $L_i$ naturally inherits an orientation.  

The \emph{Hopf index} (also known as the \emph{Hopf invariant}) is then defined as the linking number of $L_1$ and $L_2$.  
Here, the linking number $\Lk (L_1, L_2)$ is defined as follows.  

Let $n$ be the number of components of $L_2$.  
Then, the first homology group $H_1 (S^3 \setminus L_2)$ is a free abelian group $\mathbb{Z}^n$ of rank $n$, whose basis elements are uniquely determined by the orientations of $S^3$ and $L_2$.  
The homology class $[L_1]$ of $L_1$ in $H_1 (S^3 \setminus L_2)$ can be expressed as an element $(k_1, \dots, k_n) \in \mathbb{Z}^n$.  
The linking number $\Lk (L_1 , L_2)$ is then given by  
\begin{equation}\label{eq: Lk(L_1, L_2)}
\Lk (L_1 , L_2) = k_1 + \dots + k_n \in \mathbb{Z}.
\end{equation}
See \cite{Ste49} for more details.

\section{Mathematical model of high-Hopf-index hopfions}
\label{sec:Mathematical model of high-Hopf-index hopfions}

\subsection{Construction of generalized Hopf maps}
\label{subsec:Construction of generalized Hopf maps}
From this point onward, we construct a family of maps 
$\varphi_n\colon S^3 \to S^2$ that generalize the original Hopf fibration $\varphi\colon S^3 \to S^2$ 
for each $n \in \ZZ\setminus\{0\}$.  
This construction is achieved by carefully adapting the framework of stable maps developed in \cite{CT08, Sae96, IK17} to fit the specific setting of this paper.

For $n \in \NN$ and $k \in \{ 1, 2, \ldots, n \}$, 
let $x_{n, k} = \frac{1}{2} e^{2 \pi k i / n}$, 
$\varepsilon_1 = \frac{1}{4}$, and 
$\varepsilon_n = \frac{1}{2} \sin (\pi / n)$. 
Define  
\begin{equation}\label{eq: D_{n,k}}
D_{n, k} := \left\{ \frac{1}{2} e^{2 \pi k i / n}  + r e^{i\theta} 
\biggm| 0 \leq r \leq \varepsilon_n , \theta \in \mathbb{R} \right\}. 
\end{equation}
Then, $P_n := D^2 \setminus \Int \left(\bigcup_{k=1}^n D_{n, k} \right)$ is 
a disk with $n$ holes; see the left in Figure~\ref{fig:disk_with_holes}.  
\begin{figure}[htbp]
\centering\includegraphics[width=0.48\textwidth]{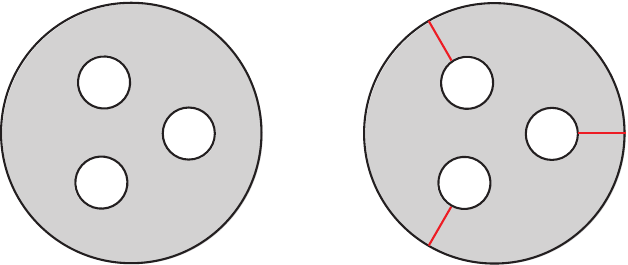}
\begin{picture}(400,0)(0,0)
\put(138,0){$P_3$}
\put(152,54){\tiny $D_{3,3}$}
\put(124,70){\tiny $D_{3,1}$}
\put(124,37){\tiny $D_{3,2}$}

\put(254,0){$P_3$}
\put(270,54){\tiny $D_{3,3}$}
\put(242,70){\tiny $D_{3,1}$}
\put(242,37){\tiny $D_{3,2}$}
\put(304,53){\color{red} $\alpha_{3,3}$}
\put(224,98){\color{red} $\alpha_{3,1}$}
\put(224,10){\color{red} $\alpha_{3,2}$}

\end{picture}
\caption{(Left) The disk $P_3$ with $3$ holes. 
(Right) The arcs $\alpha_{3,1} \alpha_{3,2}, \alpha_{3,3}$ in $P_3$.}
\label{fig:disk_with_holes}
\end{figure}

For each small disk $D_{n, k}$, we define a map $\eta_{n, k}\colon D_{n, k} \to D^2$ by  
\begin{equation}\label{eq: eta_{n,k}}
\eta_{n, k} ( x_{n, k} + r e^{i\theta} ) = \frac{1}{\varepsilon_n} r e^{(\theta - 2 \pi k / n) i}. 
\end{equation}
Let $\alpha_{n,1}, \dots, \alpha_{n,n}$ be arcs in $P_n$ as shown  
on the right side of Figure~\ref{fig:disk_with_holes}.  
Consider the product space $P_n \times S^1$.  
The products $A_{n, 1} := \alpha_{n,1} \times S^1, \dots,  
A_{n, n} := \alpha_{n,n} \times S^1$ are annuli properly embedded in  
$P_n \times S^1$.  
Define $\tau_n \colon P_n \times S^1 \to P_n \times S^1$ as the composition
\begin{equation}\label{eq: tau_n}
\tau_n = \tau_{A_{n, n}} \circ \cdots \circ \tau_{A_{n,1}}
\end{equation}
of the annulus twists $\tau_{A_{n,1}}, \ldots, \tau_{A_{n,n}}$ along the
annuli $A_{n,1}, \dots, A_{n,n}$.  
See Figure~\ref{fig:tau}, where $P_3 \times S^1$ is viewed as the result of identifying the top and bottom faces of $P_3 \times [0,1]$ depicted in the figure.  
\begin{figure}[htbp]
\centering\includegraphics[width=0.5\textwidth]{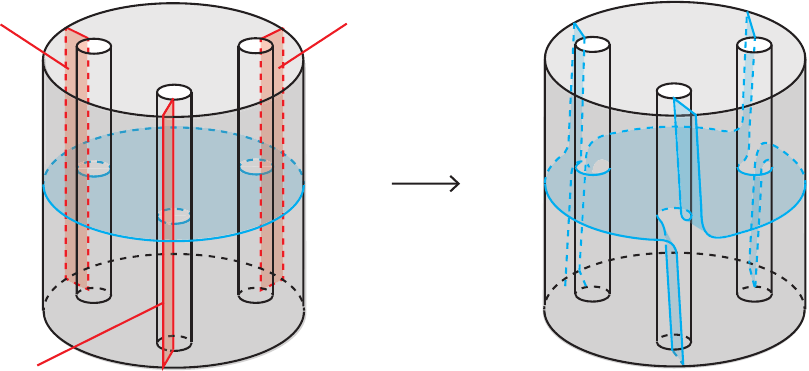}
\begin{picture}(400,0)(0,0)
\put(123,0){$P_3 \times S^1$}
\put(86,10){\color{red} $\alpha_{{3,1}}$}
\put(188,104){\color{red} $\alpha_{{3,2}}$}
\put(74,104){\color{red} $\alpha_{{3,3}}$}

\put(200,67){$\tau_3$}

\put(255,0){$P_3 \times S^1$}
\end{picture}
\caption{The map $\tau_3$. Here, $P_3 \times S^1$ is viewed as the result of identifying the top and bottom faces of $P_3 \times [0,1]$.}
\label{fig:tau}
\end{figure}
In Figures~\ref{fig:annulus_twist} and \ref{fig:tau}, the annuli along which the twists are performed
are highlighted (in red), while the auxiliary surface (in blue) is drawn
only as a ``visual guide'' to indicate how the twist propagates in the
ambient $3$-manifold.
More precisely, $\tau_n$ is supported in a small neighborhood of the
twisting annuli $A_{n,1}, \dots, A_{n,n}$ and is the identity outside this
neighborhood.
Inside a neighborhood of each annulus
$A_{n,i} = \alpha_{n,i} \times S^1$, the map rotates the $S^1$-direction
once while moving along the transverse direction of $A_{n,i}$.
Thus, one can think of $\tau_n$ as a composition of localized “shears”
supported near the annuli: objects disjoint from these neighborhoods are
unaffected, while objects intersecting them, for example the large boundary circle $\partial D \times \{\ast\}$ of the auxiliary surface
$P_n \times \{\ast\}$ shown in blue, are effectively twisted by a total
of $n$ turns.

Let $h_n\colon P_n \to \RR$ be the height function shown in
Figure~\ref{fig:height_function}.
\begin{figure}[htbp]
\centering\includegraphics[width=0.85\textwidth]{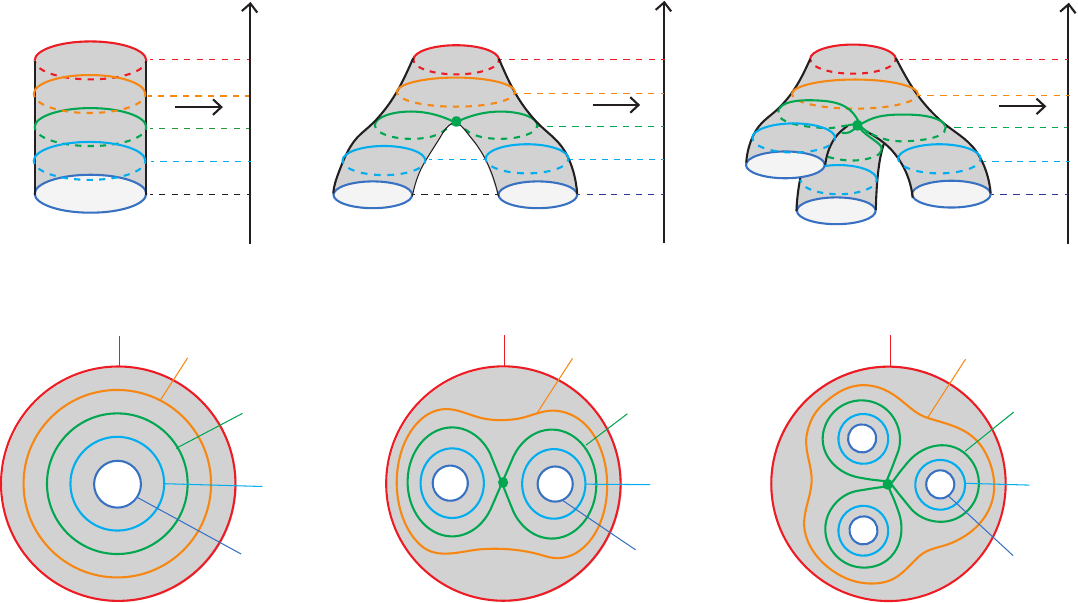}
\begin{picture}(400,0)(0,0)
\put(81,184){$h_1$}
\put(107,192){\color{red} $1$}
\put(106.5,170){\color{teal} $\frac{1}{2}$}
\put(107,146){\color{blue} $0$}
\put(43,130){$P_1$} 
\put(56,105){\color{red} $1$} 
\put(82,100){\color{orange} $\frac{3}{4}$} 
\put(104,76){\color{teal} $\frac{1}{2}$} 
\put(110,49){\color{cyan} $\frac{1}{4}$} 
\put(103,22){\color{blue} $0$} 
\put(52,0){$P_1$} 

\put(220,184){$h_2$}
\put(246,192){\color{red} $1$}
\put(245.5,170){\color{teal} $\frac{1}{2}$}
\put(246,146){\color{blue} $0$}
\put(167,130){$P_2$}
\put(186,105){\color{red} $1$} 
\put(212,100){\color{orange} $\frac{3}{4}$} 
\put(234,76){\color{teal} $\frac{1}{2}$} 
\put(240,49){\color{cyan} $\frac{1}{4}$} 
\put(235,24){\color{blue} $0$} 
\put(181,0){$P_2$} 

\put(357,184){$h_3$}
\put(383,192){\color{red} $1$}
\put(382.5,170){\color{teal} $\frac{1}{2}$}
\put(383,146){\color{blue} $0$}
\put(305,128){$P_3$} 
\put(316,105){\color{red} $1$} 
\put(344,100){\color{orange} $\frac{3}{4}$} 
\put(364,76){\color{teal} $\frac{1}{2}$} 
\put(369,49){\color{cyan} $\frac{1}{4}$} 
\put(363,22){\color{blue} $0$} 
\put(311,0){$P_3$} 
\end{picture}
\caption{The maps $h_n\colon P_n \to \RR$ ($n \in \NN$).}
\label{fig:height_function}
\end{figure}
For each $n$, the upper panel of the figure depicts the graph of $h_n$,
and the lower panel shows the corresponding level sets (contours) of
$h_n$ drawn on $P_n$.
These contour diagrams provide an intuitive picture of how the fibers
$h_n^{-1}(t)$ change as the level $t$ varies.

The map $h_1$ is non-singular, so every level set is a smooth
$1$-manifold.
In this case, each level set consists of a single circle, and its
topology does not change as $t$ crosses any value.
For $n \geq 2$, the map $h_n$ has a single critical point.
It is an $n$-fold saddle point in the following sense.
At the critical value, the level set develops an $n$-pronged
singularity.
Below the critical value, the level set consists of $n$ disjoint circles, while above the critical value these $n$ circles merge into a single circle.
Accordingly, as the level increases past the critical value, the level set undergoes a local transition in which $n$ circles merge into one, while if the level is decreased past the critical value, a single circle splits into $n$ circles.
Away from the critical value, the level sets are smooth, and the only topological change of the fibers occurs at this unique saddle.

We define the maps $g_0\colon [0,1] \times S^1 \to S^2 \subset \mathbb{R}^3$ and 
$g_\pm \colon D^2 \to S^2$ by 
\begin{align}
&g_0 (  t, e^{i\theta}  ) 
     = \left( \sin \left( \frac{3- 2t}{4} \pi \right) \cos \theta,\, \right. 
\sin \left( \frac{3- 2t}{4} \pi \right) \sin \theta,\, 
\left. \cos \left( \frac{3- 2t}{4} \pi \right) \right), \\
&g_+ (r e^{i\theta})  = 
\left( \sin \left( \frac{r \pi}{4} \right) \cos \theta,\, \right. 
\left. \sin \left( \frac{r \pi}{4} \right) \sin \theta,\, 
\cos \left( \frac{r \pi}{4} \right) \right), \\
& g_- (r e^{i\theta}) = 
\left( \sin \left( \left(1 - \frac{r}{4} \right) \pi \right) \cos \theta,\, \right.   
\sin \left( \left( 1 - \frac{r}{4} \right) \pi \right) \sin \theta,\,  \left. \cos \left( \left(1 - \frac{r}{4} \right) \pi \right) \right).  
\end{align}
See Figure~\ref{fig:annulus_to_sphere}.

\begin{figure}[htbp]
\centering\includegraphics[width=0.35\textwidth]{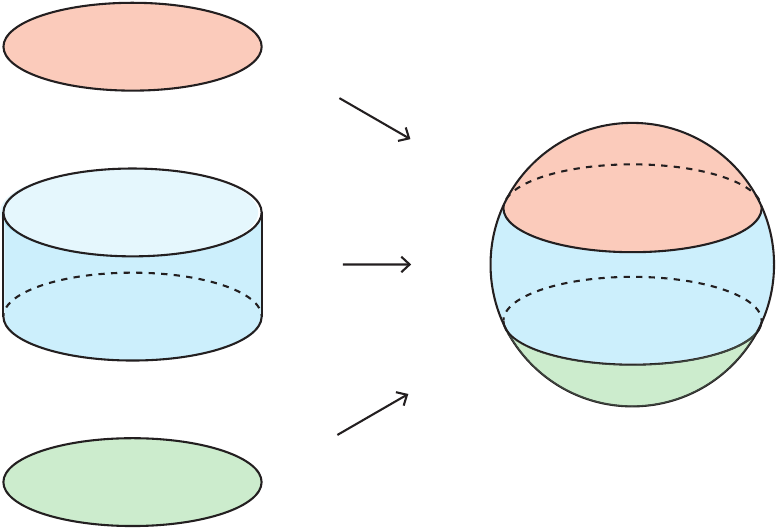}
\begin{picture}(400,0)(0,0)
\put(195,98){$g_+$}
\put(191,70){$g_0$}
\put(195,25){$g_-$}
\put(242,20){$S^2$}
\put(105,100){$D^2$}
\put(76,58){$[0,1] \times S^1$}
\put(105,18){$D^2$}
\end{picture}
\caption{The maps $g_0\colon  [0,1] \times S^1 \to S^2 \subset \RR^3$ and 
$g_\pm \colon D^2 \to S^2$.}
\label{fig:annulus_to_sphere}
\end{figure}

Now we are ready to define the map $\varphi_n\colon S^3 \to S^2$. 
We first fix the standard genus-1 Heegaard decomposition of $S^3$.
Explicitly, we identify $S^3$ with 
$\{ (z_1, z_2) \in \mathbb{C}^2 \mid |z_1|^2 + |z_2|^2 = 1 \}$, 
and we set
\begin{equation}\label{eq:Vpm}
V_+ = \{(z_1,z_2)\in S^3 \mid |z_1| \le |z_2| \},
\quad
V_- = \{(z_1,z_2)\in S^3 \mid |z_1| \ge |z_2| \}.
\end{equation}
Both $V_+$ and $V_-$ are solid tori, and they are glued together along their torus boundaries $\{(z_1,z_2)\in S^3 \mid |z_1| = |z_2| \}$. 
We identify each of these two subspaces $V_\pm$ with the standard solid torus
$D^2 \times S^1$.
Then we can write $S^3 = V_- \cup_{\phi} V_+$, where  
$\phi\colon \partial V_+ \cong \partial D^2 \times S^1 \to \partial D^2 \times S^1 \cong \partial V_-$ 
is given by  
\begin{equation}\label{eq: phi(etheta_1i, etheta_2i)}
\phi (e^{i \theta_1}, e^{i \theta_2}) = (e^{i \theta_2}, e^{i \theta_1}).
\end{equation}
We regard $P_n \times S^1$ and $D_{n, k} \times S^1$ ($k=1, \dots, n$)  
as naturally embedded subspaces of $V_- = D^2 \times S^1$, so that we have  
\begin{equation}\label{eq: V_-}
V_- = ( P_n \times S^1 ) \cup \left( \bigcup_{k=1}^n (D_{n, k} \times S^1) \right).
\end{equation}
Define the map $\varphi_{n, 0}\colon P_n \times S^1 \to S^2$ by  
\begin{equation}\label{eq: varphi_{n,0}}
\varphi_{n, 0} = g_0 \circ (h_n \times \operatorname{id}_{S^1}) \circ \tau_n^{-1}.
\end{equation}

Next, we define a map $f_n \colon D^2 \times S^1 \to D^2$ by
\begin{equation}\label{eq: f_n}
 f_n (r e^{i\theta_1}, e^{i \theta_2}) = r e^{(\theta_1 + n \theta_2)i}.
\end{equation}
Then, define the map $\varphi_{(n, k), -}\colon  
 D_{n, k} \times S^1 \to S^2$ by  
\begin{equation}\label{eq: varphi_{(n,k),-}}
\varphi_{(n, k), -} = g_- \circ f_1 \circ ( \eta_{n, k} \times \operatorname{id}_{S^1} ).
\end{equation}
This allows us to define a map  
\begin{equation}\label{eq: varphi_{n,-}}
\varphi_{n, -}\colon \bigsqcup_{k=1}^n (D_{n, k} \times S^1) \to S^2
\end{equation}
by imposing $\varphi_{n, -} (x) = \varphi_{(n, k), -} (x)$  
if $x \in D_{n, k} \times S^1$.  

Furthermore, define the map $\varphi_{n, +}\colon V_+ \to S^2$ by  
\begin{equation}\label{eq: varphi_{n,+}}
\varphi_{n, +} = g_+ \circ f_n.
\end{equation}
Since the map $\tau_n$ is defined up to smooth isotopy, we may assume that  
\begin{itemize}
\item 
for each point $x = (x_1, x_2, 1/\sqrt{2}) \in S^2$, the preimage  
$\varphi_{n, 0}^{-1} (x)$, which is a single circle on $\partial V_-$,  
coincides with $\phi ( \varphi_{n, +}^{-1} (x) )$;  
\item
for each point $x = (x_1, x_2, -\sqrt{2}) \in S^2$, the preimage  
$\varphi_{n, 0}^{-1} (x)$, which is a disjoint union of  
$n$ circles, coincides with $\varphi_{n, -}^{-1} (x)$.  
\end{itemize}

Now define the required map $\varphi_n\colon S^3 \to S^2$ by  
\begin{align}
\varphi_n (x) = 
\begin{cases}
\varphi_{n,+}(x), & x \in V_+, \\
\varphi_{n,0} (x), & x \in P_n \times S^1 \subset V_-, \\
\varphi_{n,-} (x), & x \in \bigsqcup_{k=1}^n (D_{n, k} \times S^1) \subset V_-.
\end{cases}
\label{eq:phi_n}
\end{align}
This map is well-defined due to the assumption stated above.  
Furthermore, by an isotopy of $\tau_n$, we may assume that $\varphi_n$ is smooth.  
We call this map the \textit{generalized Hopf map of order $n$}. 
Since the preimage of two distinct points in the upper hemisphere of $S^2$ under $\varphi_n$ forms a $(2,2n)$-torus link (see 
Example~\ref{example: preimages} below), 
we see that the Hopf index of $\varphi_n$ is $n$. 
Consider $S^3$ as the subset 
$ \{ (z_1, z_2) \in \CC^2 \mid |z_1|^2 + |z_2|^2 = 1 \} \subset \CC^2$. 
Define the involution $\rho\colon S^3 \to S^3$ by $\rho(z_1, z_2) = (z_1, \bar{z}_2)$, where $\bar{z}_2$ denotes the complex conjugate of $z_2$.
Then, for each $n \in \NN$, we define the smooth map $\varphi_{-n}\colon S^3 \to S^2$ by the composition $\varphi_{-n} = \varphi_n \circ \rho$.
It is straightforward to verify that the Hopf index of $\varphi_{-n}$ is $-n$.
In this manner, for any non-zero integer $n$, the generalized Hopf map $\varphi_n\colon S^3 \to S^2$ with Hopf index $n$ is well defined.

Although the earlier discussion of the map $f$ may have seemed technical, the following description of the singular set and the Stein factorization clarifies its natural underlying structure.

\begin{proposition}
\label{prop: singularity locus}
Let $n \in \ZZ\setminus\{0\}$.  
If $n=\pm 1$, the map $\varphi_n$ has no singular points, since the
function $h_n$ has no critical points.
If $|n|\ge 2$, the singular set $S(\varphi_n)$ of the generalized Hopf map $\varphi_n \colon S^3 \to S^2$ of order $n$ forms a trivial knot in $S^3$, and its image under $\varphi_n$ is the equator of $S^2$.  
Moreover, every point in $S(\varphi_n)$ is an indefinite $n$-fold singularity. 
In all cases, $\varphi_n \in \mathcal{F}(S^3, S^2)$.
\end{proposition}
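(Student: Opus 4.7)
The plan is to decompose $\varphi_n$ according to its definition---$\varphi_{n,+}$ on $V_+$, $\varphi_{n,0}$ on $P_n \times S^1$, and $\varphi_{n,-}$ on $\bigsqcup_{k} D_{n,k}\times S^1$---and analyze each piece separately. Throughout I restrict to $n \ge 2$; the case $n = 1$ has $S(\varphi_1) = \emptyset$ (since $h_1$ is non-singular), and the case of negative index follows from the positive case via $\varphi_{-n} = \varphi_n \circ \rho$ and the fact that $\rho$ is a diffeomorphism of $S^3$.

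First I would check that $\varphi_{n,+}$ and $\varphi_{n,-}$ contribute no singularities. The map $f_n(re^{i\theta},t) = re^{i(\theta+2\pi n t)}$ is smooth complex multiplication by $e^{2\pi i n t}$ on each slice $D^2 \times \{t\}$, hence a submersion even along the core $\{0\} \times S^1$, while $g_{n,+}$ is a smooth parametrization of the northern spherical cap by the disk with everywhere nonvanishing differential. Therefore $\varphi_{n,+} = g_{n,+} \circ f_n$ is a submersion. Replacing $f_n$ by $f_1 \circ (\eta_{n,k} \times \mathrm{id}_{S^1})$ and $g_{n,+}$ by $g_{n,-}$ gives the same conclusion for $\varphi_{n,-}$. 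Consequently $S(\varphi_n) \subset P_n \times S^1 \subset V_-$.

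Next I would pin down $S(\varphi_n)$ inside the middle piece $\varphi_{n,0} = g_{n,0} \circ (-h_n \times \mathrm{id}_{S^1}) \circ \tau_n^{-1}$. Because $\tau_n$ is a diffeomorphism and $g_{n,0}$ is an embedding of $S^1 \times I$ as an equatorial band in $S^2$, the singular points of $\varphi_{n,0}$ correspond under $\tau_n$ to those of $-h_n \times \mathrm{id}_{S^1}$, namely $\{0\} \times S^1$, where $0$ is the unique critical point of $h_n$ (an $n$-fold saddle). The arcs $\alpha_{n,k}$---and hence the annuli $A_{n,k}$---are chosen in the annular region separating the inner holes from $\partial D^2$, so the supports of the annulus twists $\tau_{A_{n,k}}$ are disjoint from $\{0\} \times S^1$, which is therefore fixed pointwise by $\tau_n$. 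Thus $S(\varphi_n) = \{0\} \times S^1$, which is the core of the solid torus $V_-$ in the genus-one Heegaard splitting $S^3 = V_- \cup_\phi V_+$, a standard unknot in $S^3$. Direct evaluation of $g_{n,0}$ at the critical level of $-h_n$ (which by construction is sent to $t = 1/2$) identifies $\varphi_n(\{0\} \times S^1)$ with the equator of $S^2$.

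Finally, for the local form at each singular point, the definition of the $n$-fold saddle provides local coordinates $(x,y)$ on $P_n$ near the origin in which $h_n(x,y) = \mathrm{Re}((x+iy)^n)$. Combined with a local $S^1$-coordinate $u$, the map $-h_n \times \mathrm{id}_{S^1}$ then reads $(u,x,y) \mapsto (u, -\mathrm{Re}((x+iy)^n))$. Pre-composing with $\tau_n^{-1}$ and post-composing with $g_{n,0}$ (both local diffeomorphisms near the points in question) preserves the local equivalence class, and the rotation $x + iy \mapsto e^{i\pi/n}(x+iy)$ converts $-\mathrm{Re}(z^n)$ into $\mathrm{Re}(z^n) = \omega_n(x,y)$, giving the required normal form $(u,x,y) \mapsto (u, \omega_n(x,y))$. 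The main technical obstacle is the bookkeeping: one must fix the coordinate conventions precisely enough that the critical value of $-h_n$ lands exactly on the level sent by $g_{n,0}$ to the equator, and one must confirm that the arcs $\alpha_{n,k}$ admit the required placement away from the origin. Both are elementary once the figures are made precise, and together they establish $\varphi_n \in \mathcal{F}(S^3, S^2)$.
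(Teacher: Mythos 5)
Your proposal is correct and follows essentially the same route as the paper, which simply asserts that the claim ``follows directly from the construction'' with $S(\varphi_n)=\{0\}\times S^1$ being the core of $V_-$; you supply the details the paper omits (submersivity of $\varphi_{n,\pm}$, the twists being supported away from the core, the sign-correcting rotation for the normal form, and the vacuous case $n=1$). No gaps; your version is just a fleshed-out form of the paper's argument.
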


\begin{proof}
This follows directly from the construction.  
In fact, the singularities of $\varphi_n$ arise solely from the $n$-fold saddle point of $h_n$, so $S(\varphi_n)$ consists entirely of indefinite $n$-fold singularities.  
Explicitly, we have  
\begin{equation}\label{eq: varphi_n}
S(\varphi_n) = \{ 0 \} \times S^1 \subset P_n \times S^1 \subset V_-,
\end{equation}
where we recall that $0 \in P_n \subset \CC$ is the center point of the domain $P_n$. 
Therefore, $S(\varphi_n)$ is exactly the core of the solid torus $V_-$, which is the trivial knot. 
\end{proof}

\begin{proposition}
\label{prop: stein factorization}
Let $n \in \ZZ\setminus\{0\}$.  
Consider the Stein factorization 
\begin{equation}\label{eq: Stein factorization of varphi_n}
S^3 \xrightarrow{q_{\varphi_n}} W_{\varphi_n} \xrightarrow{\overline{\varphi_n}} S^2
\end{equation}
of the generalized Hopf map of order $n$.  
Then, the space $W_{\varphi_n}$ can be described as the result of gluing the boundaries of $|n| + 1$ disks via homeomorphisms.  
Among these disks, one is mapped to the northern hemisphere of $S^2$, and the others to the southern hemisphere.  
See Figure~\ref{fig:stein_factorization}.
\end{proposition}
\begin{figure}[htbp]
\centering\includegraphics[width=0.45\textwidth]{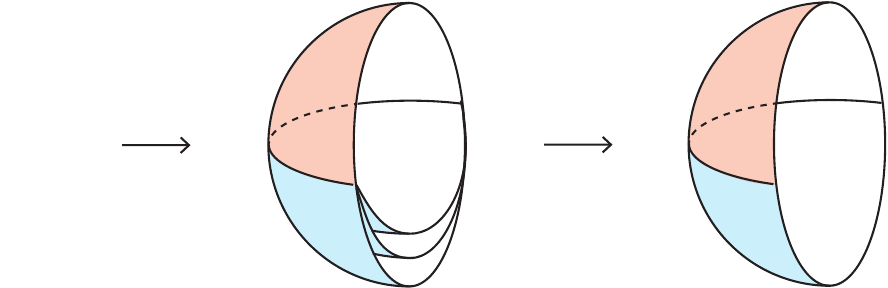}
\begin{picture}(400,0)(0,0)
\put(105,42){$S^3$}
\put(130,50){$q_{\varphi_3}$}
\put(222,50){$\overline{\varphi_3}$}
\put(176,0){$W_{\varphi_3}$}
\put(274,0){$S^2$}
\end{picture}
\caption{The Stein factorization $S^3 \xrightarrow{q_{\varphi_3}} W_{\varphi_3} \xrightarrow{\overline{\varphi_3}} S^2$. 
To clarify the structure of the quotient space $W_{\varphi_3}$ and the correspondence defined by $\overline{\varphi_3}$, both $W_{\varphi_3}$ and $S^2$ are depicted as halves.}
\label{fig:stein_factorization}
\end{figure}

\begin{proof}
By definition, the Stein factorizations of $\varphi_n$ and $\varphi_{-n}$ are identical.  
Hence, it suffices to prove the claim for $n \in \mathbb{N}$.
Decompose the 2-sphere $S^2$ into three regions:
\begin{equation}\label{eq: S_pm and S_0}
S_\pm := \{ (x_1, x_2, x_3) \in S^2 \mid \pm x_3 \geq 1/\sqrt{2} \}, \qquad
S_0 := \{ (x_1, x_2, x_3) \in S^2 \mid |x_3| \leq 1/\sqrt{2} \}.
\end{equation}
Note that 
$\Im \varphi_{n,\pm} = \Im g_\pm = S_\pm$ and $\Im \varphi_{n,0} = \Im g_0 = S_0$.

Let $x \in S_+$. 
Then $g_+^{-1}(x)$ is a single point, and its preimage under $f_n$ is a single circle.  
Thus, the preimage of $x$ under $\varphi_{n,+}$ is a single circle.  
The map $f_n$ coincides with the quotient map 
\begin{equation}\label{eq: q_varphi_{n,+}}
q_{\varphi_{n,+}} \colon V_+ \to W_{\varphi_{n,+}}, 
\end{equation}
where $W_{\varphi_{n,+}}$ is a disk mapped to $S_+$ by $\overline{\varphi_{n,+}}$.

Now let $x \in S_-$. 
Then $g_-^{-1}(x)$ is a single point, and under each map $\varphi_{(n,k),-}$, its preimage is a single circle.  
Thus, $\varphi_{n,-}^{-1}(x)$ consists of $n$ circles.  
By the definition of $\varphi_{(n,k),-}$, the quotient map
\begin{equation}\label{eq: q_{varphi_{n}}}
q_{\varphi_{n,-}} \colon \bigsqcup_{k=1}^n (D_{n,k} \times S^1) \to W_{\varphi_{n,-}}
\end{equation}
sends each solid torus $D_{n,k} \times S^1$ to a disk, thus, $W_{\varphi_{n,-}}$ consists of $n$ disks, which is mapped to $S_-$ by $\overline{\varphi_{n,-}}$.

Finally, let $x = (x_1, x_2, x_3) \in S_0$. 
Then $g_0^{-1}(x)$ is a single point.  
If $x_3 \geq 0$, its preimage under $h_n \times \mathrm{id}_{S^1}$ is a single circle, and hence $\varphi_{n,0}^{-1}(x)$ is a single circle.  
If $x_3 < 0$, the preimage under $h_n \times \mathrm{id}_{S^1}$ consists of $n$ circles, so $\varphi_{n,0}^{-1}(x)$ consists of $n$ circles.
Therefore, quotient space $W_{\varphi_{n,0}}$, given by
\begin{equation}\label{eq: q_{varphi_{n,o}}}
q_{\varphi_{n,0}} \colon (P_n \times S^1) \to W_{\varphi_{n,0}},
\end{equation}
is homeomorphic to $K_{1,n+1} \times S^1$, where $K_{1,n+1}$ denotes the star graph with $n+1$ leaves, or equivalently, the cone over $n+1$ points.  
The map $\overline{\varphi_{n,0}}$ sends the subset $\{v\} \times S^1$ to the equator of $S^2$, where $v$ is the central vertex of degree $n+1$.

Since $W_{\varphi_n}$ is constructed by gluing 
$W_{\varphi_{n,+}}$, $W_{\varphi_{n,-}}$, and $W_{\varphi_{n,0}}$  
along their boundaries via homeomorphisms, the claim follows.
\end{proof}

\subsection{Classification of preimages of two distinct points}

\begin{example}
\label{example: preimages}
\begin{enumerate}[label=(\arabic*)]
\item 
For $n = 1$, it is straightforward to see that the generalized Hopf map  
$\varphi_1\colon S^3 \to S^2$ of order $1$  
is nothing but the Hopf map $\varphi$. 
Thus, the preimage of any two distinct points of $S^2$ forms the Hopf link. 
\item
Let $n = 2$.  
Then the generalized Hopf map  
$\varphi_2\colon S^3 \to S^2$ of order $2$ is a simple fold map whose set of singular points  
$S (\varphi_2)$ forms the trivial knot consisting only of indefinite fold points.  
The image $\varphi_2 (S (\varphi_2))$ of the singular points is the equator of $S^2$.  
Figure~\ref{fig:preimage_2}~(i)--(vi) 
illustrate the preimages of two points in $S^2$ under the map $\varphi_2$. 
\begin{figure}[htbp]
\centering\includegraphics[width=0.95\textwidth]{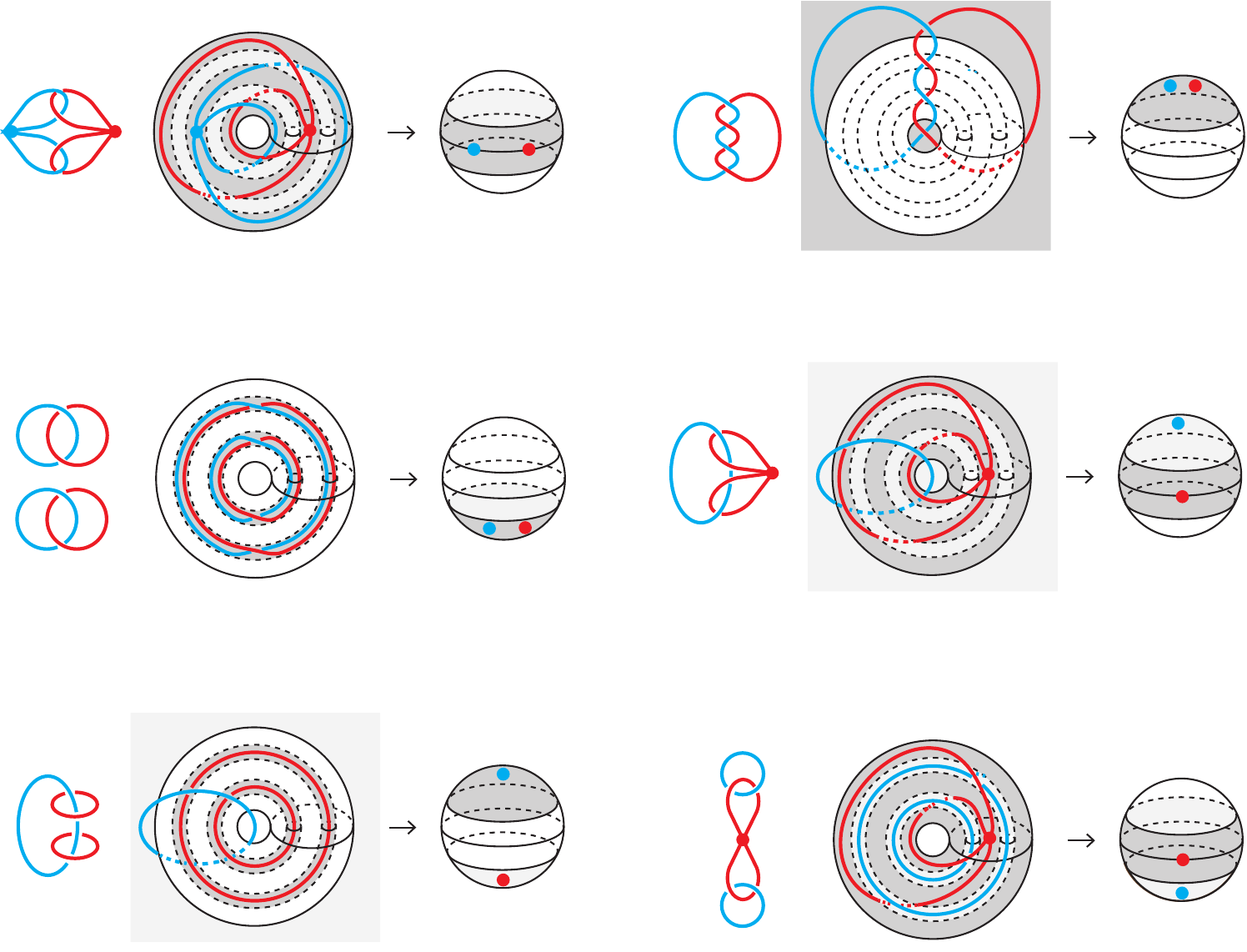}
\begin{picture}(400,0)(0,0)
\put(72,226){(i)}
\put(290,226){(ii)}
\put(71,115){(iii)}
\put(290,115){(iv)}
\put(71,0){(v)}
\put(290,0){(vi)}
\end{picture}
\caption{The preimage under $\varphi_2$ of two distinct points on $S^2$ in the following configurations:  
(i) both on the equator;  
(ii) both in the upper hemisphere;  
(iii) both in the lower hemisphere;  
(iv) one on the equator and one in the upper hemisphere;  
(v) one in the upper hemisphere and one in the lower hemisphere;  
(vi) one on the equator and one in the lower hemisphere.}
\label{fig:preimage_2}
\end{figure}

Since $\varphi_2$ is a fold map, any map that is 
sufficiently close to $\phi_2$ in the space $C^\infty (S^3, S^2)$ of smooth maps, equipped with the Whitney $C^\infty$-topology, has the 
same configurations of 
preimages of two distinct points in $S^2$.

\item
By the same arguments as above, we see that all 
possible configurations of 
preimages of two distinct points in $S^2$ under the generalized Hopf map 
$\varphi_3$ of order $3$ are as illustrated in Figure~\ref{fig:preimage_3}. 
\begin{figure}[htbp]
\centering\includegraphics[width=0.85\textwidth]{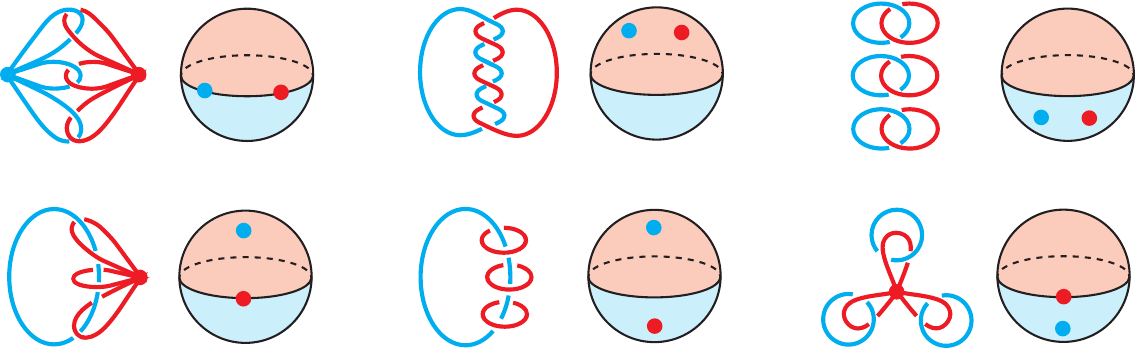}
\begin{picture}(400,0)(0,0)
\end{picture}
\caption{All possible patterns of spatial graphs obtained as preimages of two distinct points in $S^2$ under the generalized Hopf map $\varphi_3$ of order $3$.}
\label{fig:preimage_3}
\end{figure}
\end{enumerate}
\end{example}

From the above construction and Example~\ref{example: preimages}, we obtain the following. 
\begin{theorem}
Let $n$ be a natural number. 
The preimage of any two distinct points in $S^2$ under the generalized Hopf map $\varphi_n$ of order $n$ is one of the spatial graphs \textup{(i)--(vi)} illustrated in Figure~\ref{fig:preimage_n}. 
In the case of the map $\varphi_{-n}$, the corresponding preimages are the mirror images of those shown above.
\begin{figure}[htbp]
\vspace{1em}
\centering\includegraphics[width=1.0\textwidth]{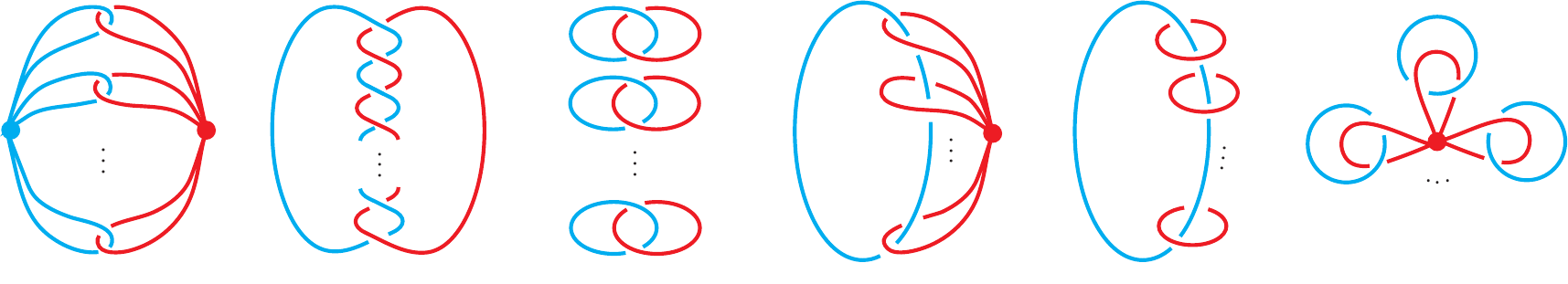}
\begin{picture}(400,0)(0,0)
\put(9,0){(i)}
\put(10,90){\footnotesize $1$}
\put(10,71){\footnotesize $2$}
\put(10,32){\footnotesize $n$}

\put(82,0){(ii)}
\put(65,93){\footnotesize $n$ full twists}

\put(145,0){ (iii)}
\put(133,77){\footnotesize $1$}
\put(133,58){\footnotesize $2$}
\put(133,24){\footnotesize $n$}

\put(220,0){(iv)}
\put(218,78){\footnotesize $1$}
\put(218,62){\footnotesize $2$}
\put(218,23){\footnotesize $n$}

\put(292,0){\footnotesize (v)}
\put(293,76){\footnotesize $1$}
\put(296,61){\footnotesize $2$}
\put(292,26){\footnotesize $n$}

\put(368,0){(vi)}
\put(374,89){\footnotesize $1$}
\put(340,62){\footnotesize $2$}
\put(410,62){\footnotesize $n$}

\end{picture}
\caption{The six distinct patterns of spatial graphs that arise as preimages of two distinct points in $S^2$ under the generalized Hopf map $\varphi_n$ of order $n$. 
In the case of $n=1$, the spatial graphs (i)--(vi) are all Hopf links. 
This summary of distinct patterns is consistent with experimental results shown in Figure~\ref{fig:heliknoton_motivation} and generalizes such structures to arbitrarily high Hopf index values.}
\label{fig:preimage_n}
\end{figure}
\end{theorem}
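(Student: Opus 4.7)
The plan is to stratify $S^2$ by the topological type of the preimage and then compute one representative preimage in each case. By Proposition~\ref{prop: singularity locus}, the critical value set of $\varphi_n$ is the equator $E \subset S^2$; let $H_+ = \{x_3 > 0\}$ and $H_- = \{x_3 < 0\}$ denote the open hemispheres. I first show that the ambient-isotopy class of the spatial graph $\varphi_n^{-1}(q_1) \cup \varphi_n^{-1}(q_2)$ in $S^3$ depends only on which of the three strata $H_+, E, H_-$ contains each $q_i$. Over each open hemisphere the map $\varphi_n$ is a submersion, so Ehresmann's theorem gives local triviality and thus isotopy invariance of the preimage as a point moves within $H_\pm$. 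The equator $E$ is traced out by $\varphi_n|_{S(\varphi_n)}$, which is an embedding, so the same argument (applied to the pair $(\varphi_n, \varphi_n|_{S(\varphi_n)})$ via the local $n$-fold saddle model $(u,x,y) \mapsto (u, \omega_n(x,y))$) shows that the preimage of a point on $E$ also varies by ambient isotopy as the point moves along $E$. Combining the one-point arguments for $q_1$ and $q_2$ with a choice of isotopy that avoids the other point yields the joint isotopy invariance. This reduces the theorem to the $\binom{3+1}{2} = 6$ stratum-pair cases (i)--(vi).

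Next I verify each case by direct computation using the explicit formulas of Section~\ref{subsec:Construction of generalized Hopf maps}. For \textbf{case (ii)} (both $q_i \in H_+$), choose representatives in $S_+$; then the preimage sits in $V_+$, and since $\varphi_{n,+} = g_{n,+} \circ f_n$ with $f_n(re^{i\theta},t) = re^{(\theta+2\pi n t)i}$, each preimage is a $(1,n)$-curve on a concentric torus and together they form the $(2,2n)$-torus link, matching (ii). For \textbf{case (iii)} (both in $H_-$), choose representatives in $S_-$; each preimage has one circle in each solid torus $D_{n,k} \times S^1$, and within that solid torus the local formula $g_{n,-} \circ f_1 \circ (\eta_{n,k} \times \id)$ produces a Hopf link, while distinct solid tori $D_{n,k} \times S^1$ are parallel and disjoint. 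For \textbf{case (v)}, take one point in $S_+$ and one in $S_-$; the first preimage is a single circle $C$ in $V_+$ and the second is $n$ circles, each a core-parallel curve in its own $D_{n,k} \times S^1$, and the gluing $\phi$ forces each of these $n$ circles to link $C$ once, giving total linking number $n$.

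The cases \textbf{(i), (iv), (vi)} involving $E$ require more care since the preimage of a point $q \in E$ passes through $S(\varphi_n)$. Using the $n$-fold saddle model, the level set $h_n^{-1}(1/2)$ is a wedge of $n$ arcs meeting at the saddle point, so $(-h_n \times \id_{S^1})^{-1}(g_{n,0}^{-1}(q))$ is a wedge of $n$ circles meeting $S(\varphi_n)$ at a single point; applying $\tau_n^{-1}$ transports this into $P_n \times S^1$ while twisting in the prescribed way. For case (i) I take two points on $E$, each producing such a wedge at distinct points of the singular knot, and identify the resulting configuration with the graph (i); for cases (iv) and (vi) I combine this wedge with the single circle, respectively $n$ circles, already computed in cases (ii), (iii). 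Finally, the linking numbers can be verified to sum to $n$ in each case, consistent with the Hopf invariant computed in Section~\ref{subsec:Hopf invariant}.

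The main obstacle is the explicit identification in cases (i), (iv), (vi) of \emph{how} the $n$ petals of each wedge are linked with the other components: one must track the action of the composite annulus twist $\tau_n = \tau_{A_{n,n}} \circ \cdots \circ \tau_{A_{n,1}}$ on the vertical arcs $\alpha_{n,k} \times \{t\}$ in $P_n \times S^1$, and then glue via $\phi$ with the preimage in $V_+$. This bookkeeping is conceptually straightforward but combinatorially delicate; the cleanest route is to use the Stein factorization of Proposition~\ref{prop: stein factorization}, which already encodes the combinatorial incidence of the $n+1$ disks that make up $W_{\varphi_n}$, and then lift the spatial-graph structure back along $q_{\varphi_n}$ using the explicit model of $\tau_n$. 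The mirror-image statement for $\varphi_{-n} = \varphi_n \circ \rho$ is immediate from the fact that $\rho$ is orientation-reversing on $S^3$.
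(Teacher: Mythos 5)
Your proposal is correct and follows essentially the same route as the paper: the paper derives the theorem directly from the explicit construction of $\varphi_n$ together with the case-by-case inspection in Example~\ref{example: preimages}, which organizes the six configurations exactly by which of the three strata (upper hemisphere, equator, lower hemisphere) contains each of the two points. Your write-up in fact supplies details the paper leaves implicit --- notably the Ehresmann/local-model argument for isotopy invariance within each stratum and the identification of the singular fiber over an equatorial point as a wedge of $n$ circles --- and correctly isolates the one genuinely delicate step, namely tracking the composite annulus twist $\tau_n$ in cases (i), (iv), (vi).
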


\section{Several variations}

\subsection{Concentric hybridization of generalized Hopf maps}

Beyond individual hopfions, more complex field configurations, such as those discussed in~\cite{AcSm17PRX}, can be modeled by combining generalized Hopf maps.  
To construct such configurations, we define a continuous map  
\begin{equation}\label{eq: varphi_mn}
\varphi_{m,n} \colon S^3 \to S^2
\end{equation}
for $m, n \in \ZZ\setminus\{0\}$, by merging two given maps  
$\varphi_m \colon S^3 \to S^2$ and $\varphi_n \colon S^3 \to S^2$.

We view $S^3$ as the unit sphere in $\mathbb{R}^4$, that is,
\begin{equation}\label{eq: S3}
S^3 = \left\{ (x, y, z, w) \in \mathbb{R}^4 \mid x^2 + y^2 + z^2 + w^2 = 1 \right\}.
\end{equation}
To facilitate the construction, we consider the equatorial $2$-sphere
\begin{equation}\label{eq: S2_0}
S^2_0 = \left\{ (x, y, z, 0) \in S^3 \right\} \subset S^3.
\end{equation}
We collapse $S^2_0$ to a point, obtaining the quotient space
$S^3 / S^2_0$, 
which is homeomorphic to the \emph{wedge sum} $S^3 \vee S^3$: two copies of $S^3$ joined at a single point (the image of $S^2_0$ under the quotient map). 
Using this identification,
$S^3 / S^2_0 \cong S^3 \vee S^3$,
we define a gluing scheme that allows us to construct the map $\varphi_{m,n}$ in a consistent and well-defined way.

As in Section~3.1, we represent $S^3$ as the union
$S^3 = V_+ \cup_\varphi V_-$, where $V_+ = V_- = D^2 \times S^1$.  
We choose the identification so that a point on the core circle $\{0\} \times S^1 \subset V_+$ is mapped to the wedge point of $S^3 \vee S^3$.  
This choice is compatible with all generalized Hopf maps $\varphi_n$, as they send the core circle to the north pole $(0,0,1) \in S^2$.  
Hence, the maps $\varphi_m$ and $\varphi_n$ agree at the point where the two copies of $S^3$ are glued together.

Under this setup, we define the \emph{concentric hybridization} $\varphi_{m,n} \colon S^3 \to S^2$ 
as the composition:
\begin{equation}\label{eq: varphi_{m,n}}
S^3 \xrightarrow{q} S^3 / S^2_0 \cong S^3 \vee S^3 \xrightarrow{\varphi_m \vee \varphi_n} S^2,
\end{equation}
where $q$ is the quotient map collapsing $S^2_0$ to a point, and
\begin{equation}\label{eq: varphi_m vee varphi_n}
\varphi_m \vee \varphi_n \colon S^3 \vee S^3 \to S^2
\end{equation}
is defined piecewise: it restricts to $\varphi_m$ on the first copy of $S^3$, and to $\varphi_n$ on the second.
This construction realizes $\varphi_{m,n}$ as a \emph{homotopy-theoretic concatenation} of the maps $\varphi_m$ and $\varphi_n$.  
In particular, $\varphi_{m,n}$ represents the sum of homotopy classes:
\begin{equation}\label{eq: [varphi_{m,n}]}
 [\varphi_{m,n}] = [\varphi_m] + [\varphi_n] \in \pi_3(S^2) \cong \mathbb{Z}.
\end{equation}

As a concrete example, Figure~\ref{fig:preimage_hybrid} illustrates the preimages of two distinct points in $S^2$ under the map $\varphi_{1,-1}$ \cite{TAS18}.
The two Hopf links shown represent the preimages under $\varphi_1$ and $\varphi_{-1}$, respectively, and correspond to positively and negatively oriented Hopf links as described in Section~\ref{subsec:Hopf invariant}.  
The resulting configuration corresponds to the one depicted in \cite[Figures~7 and 8]{AcSm17PRX}, representing a topological soliton of Hopf index $Q = 0$, formed by concentric hybridization of components with $Q = 1$ and $Q = -1$.
\begin{figure}[htbp]
\centering
\includegraphics[width=0.65\textwidth]{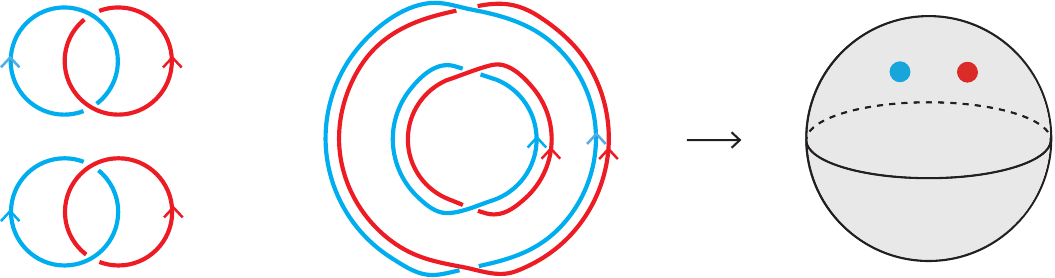}
\begin{picture}(400,0)(0,0)
\put(124,45){$\cong$}
\put(236,55){$\varphi_{1,-1}$}
\end{picture}
\caption{Preimages of two distinct points in $S^2$ under  
the concentric hybridization $\varphi_{1,-1}$, combining $\varphi_1$ and $\varphi_{-1}$.}
\label{fig:preimage_hybrid}
\end{figure}

\subsection{Solitons with broken axial symmetry}

The mathematical model $\varphi_n$ of the high-Hopf index hopfion introduced in Section~\ref{sec:Mathematical model of high-Hopf-index hopfions} is deliberately formulated in a standard and idealized framework, in order to highlight its essential topological structure.
Even minor modifications to a map in $\mathcal{F}(S^3, S^2)$---where $\mathcal{F}(S^3, S^2)$ denotes the set of all smooth maps from $M$ to $\Sigma$ whose singularities consist solely of simple indefinite multi-fold singular points---can lead to qualitatively different preimages.

In this subsection, we present a modified version of the generalized Hopf map $\varphi_2$ of order $2$, which also belongs to $\mathcal{F}(S^3, S^2)$ and is adapted to reflect the experimental findings reported in~\cite{TaSm18}.
The hopfion discussed in \cite{TaSm18} has Hopf index $Q = -2$, but for simplicity of exposition, we construct its mirror image, namely the one with Hopf index $Q = 2$.

First, we embed $S^1 \times [0,1]$ into $D^2$ via the map $\eta_- \colon S^1 \times [0,1] \to D^2$, defined by $\eta_-(e^{i\theta}, t) = (1 - \frac{t}{2}) e^{i\theta}$.  
We also define an embedding of $D^2$ into itself by $\eta_+ \colon D^2 \to D^2$, given by $\eta_+(r e^{i\theta}) = \frac{r}{2} e^{i\theta}$.
We continue to use the same notation as in Section~\ref{subsec:Construction of generalized Hopf maps}.
Recall that the 3-sphere $S^3$ can be decomposed into two solid tori, denoted by $V_+$ and $V_-$.  
We construct a variation of $\varphi_2$ by modifying only the map $\varphi_{2,+} \colon V_+ \to S^2$, replacing it with a new map $\hat{\varphi}_{2,+} \colon V_+ \to S^2$ as described below.

\begin{figure}[htbp]
\centering
\includegraphics[width=0.8\textwidth]{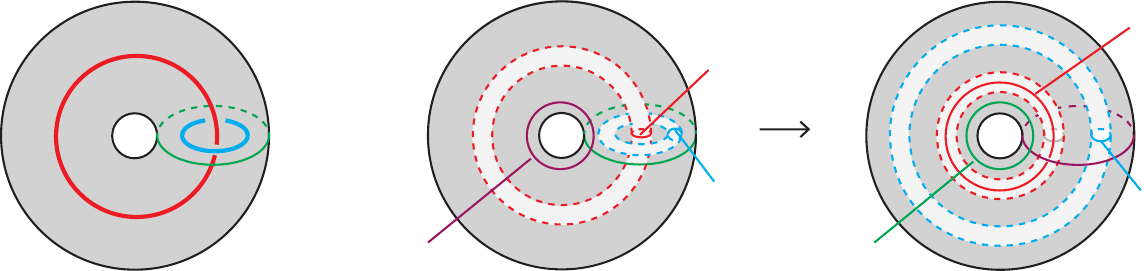}
\begin{picture}(400,0)(0,0)
\put(55,0){$L \subset V_+$}
\put(165,0){$V_+ \setminus \Int (N(L))$}
\put(310,0){$P_2 \times S^1$}
\put(241,52){\color{teal} $\mu$}
\put(148,15){\color{violet} $\lambda$}
\put(243,70){\color{red} $\mu_1$}
\put(244,34){\color{blue} $\mu_2$}
\put(266,15){\color{teal} $\psi(\mu)$}
\put(372,52){\color{violet} $\psi(\lambda)$}
\put(368,84){\color{red} $\psi(\mu_1)$}
\put(369,28){\color{blue} $\psi(\mu_2)$}
\put(260,60){$\psi$}
\put(260,43){$\cong$}
\end{picture}
\caption{(Left) The link $L$ in the solid torus $V_+$. (Right) The diffeomorphism $\psi$ from the exterior $V_+ \setminus \Int (N(L))$ of the link $L$ in $V_+$ to the product space $P_2 \times S^1$.}
\label{fig:solid_torus_hopf_link}
\end{figure}

Let $L$ denote the link inside the solid torus $V_+$ depicted on the left in Figure~\ref{fig:solid_torus_hopf_link}.  
The space $V_+ \setminus \Int(N(L))$ is diffeomorphic to $P_2 \times S^1$, where $\Int( \, \cdot \,)$ denotes the interior.  
Let $\psi \colon V_+ \setminus \Int(N(L)) \to P_2 \times S^1$ be such a diffeomorphism.  
The right-hand side of Figure~\ref{fig:solid_torus_hopf_link} shows the images of the meridional curve $\mu$ and longitudinal curve $\lambda$ of $V_+$, as well as the meridional curves $\mu_1$ and $\mu_2$ of the two components of the link $L$, under the map $\psi$.

We now define a map $\varphi_{2,+}' \colon V_+ \setminus \Int(N(L)) \to S^2$ by 
\begin{equation}\label{eq: varphi_{2,+}'}
\varphi_{2,+}' = g_+ \circ \eta_- \circ (h_2 \times \id_{S^1}) \circ \psi , 
\end{equation}
where we recall that the maps $h_n$ and $g_+$ are described in Figures~\ref{fig:height_function} and \ref{fig:annulus_to_sphere}, respectively.
We identify each component of the tubular neighborhood $N(L)$ with a solid torus $D^2 \times S^1$ such that the blackboard framing induced by Figure~\ref{fig:solid_torus_hopf_link} aligns with the longitude $\{\ast\} \times S^1$, where $\ast$ is a point on $\partial D^2$.  
Then, define the map  
\begin{equation}\label{eq: varphi_{2,+}''}
\varphi_{2,+}'' \colon N(L) \cong (D^2 \times S^1) \sqcup (D^2 \times S^1) \to S^2  
\end{equation}
by  
\begin{equation}\label{eq: definition of varphi_{2,+}''}
\varphi_{2,+}'' = g_+ \circ \eta_+ \circ (f_1 \sqcup f_1).
\end{equation}
As in the construction of the generalized Hopf maps, we may assume the following:
\begin{itemize}
\item  
For each point $x \in \Im(\varphi_{2,+}') \cap \Im(\varphi_{2,+}'')$, the preimages coincide:
$(\varphi_{2,+}')^{-1}(x) = (\varphi_{2,+}'')^{-1}(x)$.  
\item  
For each point $x = (x_1, x_2, 1/\sqrt{2}) \in S^2$, the preimage $\varphi_{n,0}^{-1}(x)$ coincides with $\phi((\varphi_{2,+}'')^{-1}(x))$.
\end{itemize}

Finally, we define the modified map $\hat{\varphi}_2 \colon S^3 \to S^2$ by  
\begin{equation}\label{eq: hatvarphi_2(x)}
\hat{\varphi}_2(x) = 
\begin{cases}
\varphi_{2,+}''(x), &  x \in N(L), \\
\varphi_{2,+}'(x), & x \in V_+ \setminus \Int(N(L)), \\
\varphi_2(x), &  x \in V_-.
\end{cases}
\end{equation}
This map is well-defined by construction.  
Moreover, using an isotopy of $\tau_2$, we may further assume that $\hat{\varphi}_2$ is smooth.

From the construction, we can verify the following properties of $\hat{\varphi}_2$ by an argument analogous to that in Propositions~\ref{prop: singularity locus} and \ref{prop: stein factorization}: 
\begin{enumerate}
\item 
The singular set $S(\hat{\varphi}_2)$ of the map $\hat{\varphi}_2 \colon S^3 \to S^2$ is a $2$-component trivial link in $S^3$. Its image in $S^2$ under $\hat{\varphi}_2$ is the (disjoint) union of the equator and the boundary of the arctic circle. 
These circles divide the codomain $S^2$ into three regions: the arctic circle (a disk), the remainder of the northern hemisphere (an annulus), and the southern hemisphere. 
Note that this decomposition of $S^2$ does not correspond to the decomposition of the domain $S^3$ used in the definition of $\hat{\varphi}_2$. 
\item
Every point in $S(\hat{\varphi}_2)$ is an indefinite $2$-fold singularity. 
Hence, $\hat{\varphi}_2 \in \mathcal{F}(S^3, S^2)$.
\item 
Consider the Stein factorization 
\begin{equation}\label{eq: Stein factorization of hatvarphi_2}
S^3 \xrightarrow{q_{\hat{\varphi}_2}} W_{\hat{\varphi}_2} \xrightarrow{\overline{\hat{\varphi}_2}} S^2. 
\end{equation}
The quotient space $W_{\hat{\varphi}_2}$ consists of five regions: four disks and one annulus. 
See Figure~\ref{fig:stein_factorization2}.
\begin{figure}[htbp]
\centering\includegraphics[width=0.45\textwidth]{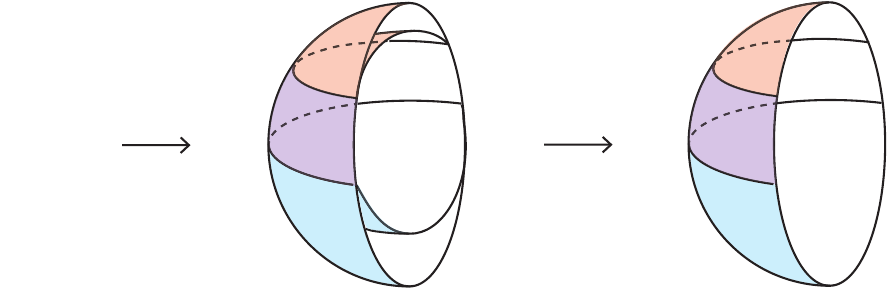}
\begin{picture}(400,0)(0,0)
\put(105,42){$S^3$}
\put(130,50){$q_{\hat{\varphi}_2}$}
\put(222,50){$\overline{\hat{\varphi}_2}$}
\put(176,0){$W_{\hat{\varphi}_2}$}
\put(274,0){$S^2$}
\end{picture}
\caption{The Stein factorization $S^3 \xrightarrow{q_{\hat{\varphi}_2}} W_{\hat{\varphi}_2} \xrightarrow{\overline{\hat{\varphi}_2}} S^2$. 
For clarity, both $W_{\hat{\varphi}_2}$ and $S^2$ are shown as halves.}
\label{fig:stein_factorization2}
\end{figure}
\end{enumerate}

Figure~\ref{fig:preimage_hat_varphi_2_1} illustrates the preimages of representative points from each of these regions, as well as the shared boundaries between them.
A key difference between the maps $\varphi_2$ and $\hat{\varphi}_2$ is that, under $\hat{\varphi}_2$, the preimage of a point near the north pole already forms a Hopf link, whereas under $\varphi_2$, the preimages are always trivial knots or links.
\begin{figure}[htbp]
\centering\includegraphics[width=0.85\textwidth]{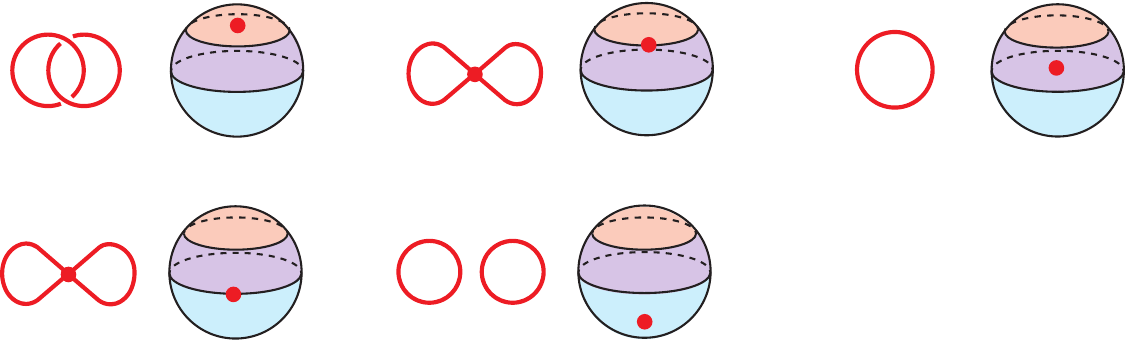}
\caption{Preimages of typical points from each region of $S^2$ under $\hat{\varphi}_2$, including their shared boundaries.}
\label{fig:preimage_hat_varphi_2_1}
\end{figure}

Figure~\ref{fig:preimage_hat_varphi_2_2} presents a classification of all possible spatial graph patterns that appear as preimages of two distinct points in $S^2$ under the map $\hat{\varphi}_2$.
Note that the six preimages shown at the bottom of Figure~\ref{fig:preimage_hat_varphi_2_2} are identical to those in Figure~\ref{fig:preimage_2} since the map $\hat{\varphi}_2$ differs from $\varphi_2$ only on the part of the domain mapped to the arctic circle.

We have described above a generalization only of $\varphi_2$ to match the experimental result reported in \cite{TaSm18}, but naturally, a similar type of modification can be considered for any $\varphi_n$.  
An interesting open question is whether, for a given knot or link $L$ and a non-zero integer $n \in \ZZ\setminus\{0\}$,  
there exists a map $\varphi \in \mathcal{F}(S^3, S^2)$ 
of Hopf index $n$ such that the preimage of a regular value in $S^2$ is isotopic to $L$.  
An ultimate approach to this problem would be to classify the set $\mathcal{F}(S^3, S^2)$, and to describe the topology of the preimage of a regular value for each such map.  
Although this goal may seem ambitious, it is not entirely out of reach, and we leave it as a subject for future investigation.

\begin{figure}[htbp]
\centering\includegraphics[width=0.85\textwidth]{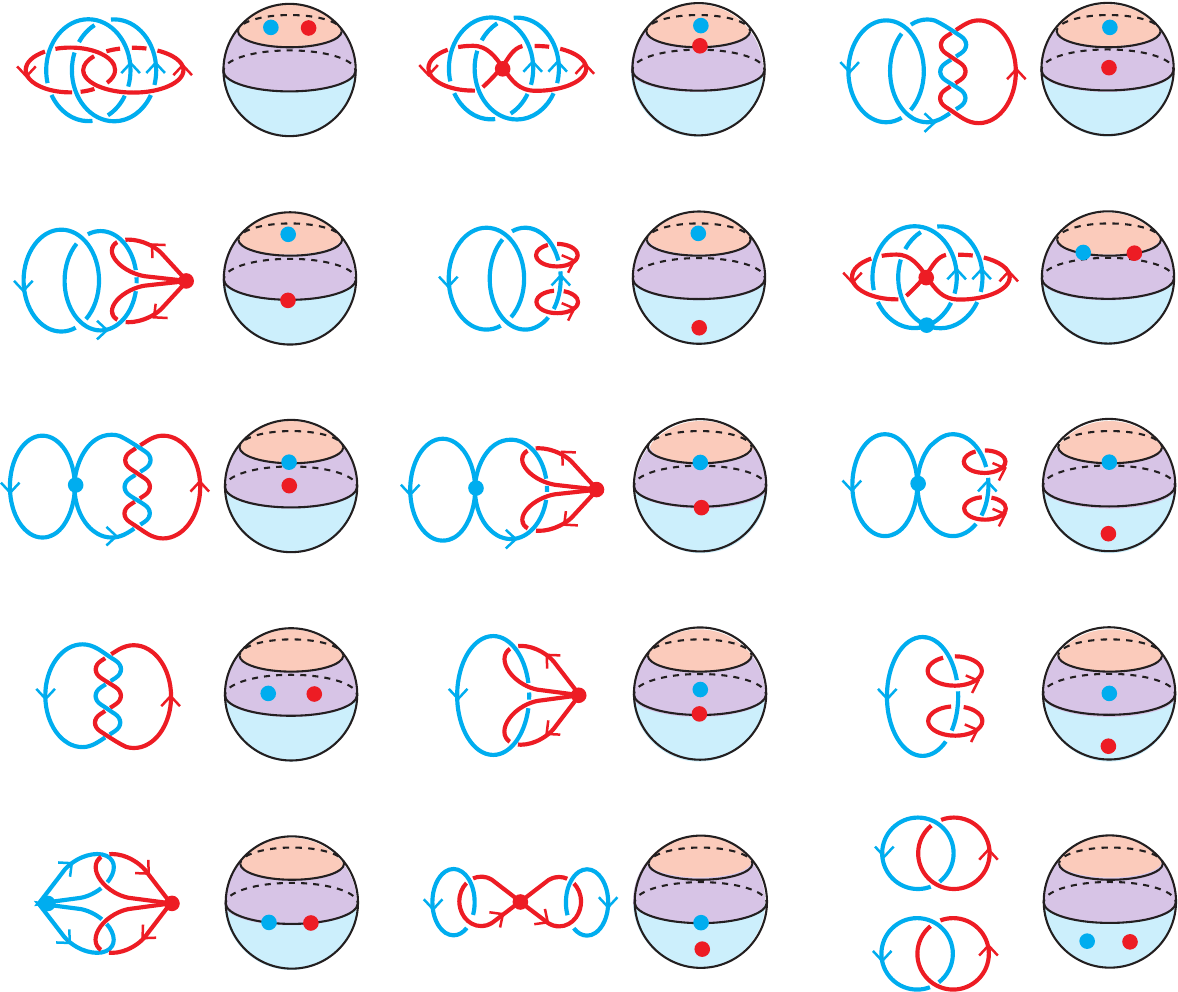}
\caption{All possible spatial graph configurations appearing as preimages of pairs of distinct points in $S^2$ under $\hat{\varphi}_2$. 
This summary of all possible structures explains the observation reported in \cite[Figure~6]{TAS18}, also demonstrating how such findings in the liquid crystal systems can be generalized to other cases of broken axial symmetry of hopfion structures.}
\label{fig:preimage_hat_varphi_2_2}
\end{figure}

\end{document}